\newcommand{\qb}{{\bf b}}
\newcommand{\qd}{{\bf d}}
\newcommand{\qg}{{\bf g}}
\newcommand{\qh}{{\bf h}}
\newcommand{\qn}{{\bf n}}
\newcommand{\qr}{{\bf r}}
\newcommand{\qt}{{\bf t}}
\newcommand{\qw}{{\bf w}}
\newcommand{\qx}{{\bf x}}
\newcommand{\qA}{{\bf A}}
\newcommand{\qC}{{\bf C}}
\newcommand{\qE}{{\bf E}}
\newcommand{\qH}{{\bf H}}
\newcommand{\qI}{{\bf I}}
\newcommand{\qzero}{{\bf 0}}
\newcommand{\be}{\begin{equation}} \newcommand{\ee}{\end{equation}}
\newcommand{\bea}{\begin{eqnarray}} \newcommand{\eea}{\end{eqnarray}}
\newtheorem{theorem}{Theorem}
\newtheorem{lemma}{Lemma}
\newenvironment{itemize*}%
  {\begin{itemize}%
    \setlength{\itemsep}{0pt}%
    \setlength{\parskip}{0pt}}%
  {\end{itemize}}
\begin{document}
\title{Full-Duplex Cooperative Cognitive Radio with Transmit Imperfections}
\author{Gan Zheng, \IEEEmembership{Senior Member, IEEE}, Ioannis Krikidis, \IEEEmembership{Senior Member, IEEE},\\ and Bj$\ddot{\rm o}$rn Ottersten, \IEEEmembership{Fellow, IEEE}
\thanks{%
Manuscript received September 25, 2012; revised December 31, 2012; accepted February 27, 2013. The associate
editor coordinating the review of this paper and approving it for
publication was Dr. W. Jingxian.}
\thanks{%
G. Zheng and B. Ottersten  are with the
Interdisciplinary Centre for Security, Reliability and Trust (SnT),
  University of Luxembourg, 4 rue Alphonse Weicker,  L-2721
Luxembourg (e-mail: \{gan.zheng, bjorn.ottersten\}@uni.lu). B. Ottersten is also with the Signal Processing Laboratory, ACCESS
Linnaeus Center, KTH Royal Institute of Technology, Sweden (e-mail: bjorn.ottersten@ee.kth.se). }
\thanks{%
I. Krikidis is with the Department of Electrical and Computer Engineering, University of Cyprus, Cyprus (e-mail: krikidis@ucy.ac.cy).}
\thanks{%
Digital Object Identifier 10.1109/TWC.2013.121464.}}

\maketitle

\markboth{IEEE Transactions on Wireless Communications, Vol. XX, No.
XX, Month 2013}{Zheng \MakeLowercase{\textit{et al.}}: Full-Duplex Cooperative Cognitive Radio with Transmit Imperfections}

\pubid{1536-1276/12\$31.00~\copyright~2013 IEEE}

\pubidadjcol

\begin{abstract}
 This paper studies the  cooperation between a primary system and a cognitive system in a cellular network where the cognitive base station
 (CBS) relays the primary signal using  amplify-and-forward or decode-and-forward protocols, and in return it can transmit its own cognitive
 signal. While the commonly used half-duplex (HD) assumption may render the cooperation less efficient due to the two orthogonal channel phases
 employed,  we propose that the CBS can work in a full-duplex (FD) mode to improve the system rate region. The problem of interest is to find the
 achievable primary-cognitive rate region by studying the cognitive rate maximization problem.
  For both modes, we explicitly consider the CBS transmit imperfections, which lead to  the residual self-interference associated with the FD
 operation mode.  We propose  closed-form solutions or efficient algorithms to
 solve the problem when the related residual interference power is non-scalable or scalable with the transmit power. Furthermore, we propose a
 simple hybrid scheme to select the HD or FD mode based on zero-forcing criterion, and provide  insights on the impact of system parameters.
 Numerical results illustrate significant performance improvement by using the FD mode and the hybrid scheme.
\end{abstract}

\begin{keywords}
 Cooperative communications, relay channel, cognitive relaying,
 full-duplex, optimization.
\end{keywords}

\section{Introduction}
\PARstart{R}{ecently}  there has been a new paradigm to improve the spectrum efficiency of a cognitive radio network  by introducing active cooperation between
the primary and cognitive systems \cite{Ephremides-07}-\cite{CCRN-Zheng}. As illustrated in Fig. \ref{fig:ccrn},  the cognitive system helps relay
the traffic from the primary base station (PBS), and in return can utilize the primary spectrum for secondary use. This is of particular importance
to the primary system when the primary user (PU)'s   data rate or outage probability requirement cannot be satisfied by itself. Therefore both
systems have strong incentive to cooperate as long as such an opportunity exists. A three-phase cooperation protocol  between primary and cognitive
systems termed ``spectrum leasing''  is proposed to exploit  primary resources in time and  frequency domain \cite{Spectrum-leasing-Simeone},
\cite{Spectrum-leasing-Su}. During Phase I and II, the cognitive base station (CBS) listens and forwards the primary traffic; in the remaining Phase
III, the CBS can transmit its own signal to the cognitive user (CU). {  Note that to avoid additional interference, in Phase II, the PBS remains idle
and only the CBS transmits signals.} The use of multiple-input-multiple-output (MIMO) antennas and beamforming at the CBS provides additional degree
of freedom for primary-cognitive cooperation in the spatial domain \cite{CCRN-Letaief}\cite{CCRN-Song}\cite{CCRN-Zheng}. { In comparison to the
conventional spectrum leasing, MIMO CBS  requires only two phases: Phase I is the same as that in spectrum leasing while in Phase II, the relay can
both forward primary signal and transmit its own signal.}
 \begin{figure}[t] 
  \centering
  \includegraphics[scale=0.5]{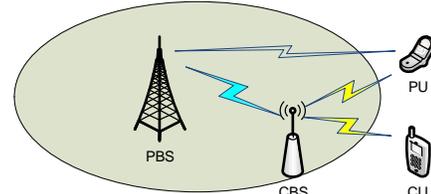}\vspace{-5mm}
  \caption{Cooperation between a primary system  and a cognitive system }\label{fig:ccrn}
  \vspace{-5mm}
\end{figure}
However, most existing works assume half-duplex (HD) mode for the CBS so at least two  orthogonal communication phases are needed,
which brings losses to throughput. As a result, the primary-cognitive cooperation is not always useful, meaning that the achievable primary rate can
be even lower than that of the direct  transmission. To remedy the situation, this paper investigates the potential use of full-duplex (FD) mode for
the CBS, i.e., it simultaneously receives primary message, and  transmits processed primary signal and its own cognitive signal on the same channel.
Since overall only one channel phase is used, the FD mode is an efficient technique to enlarge the achievable rate region. \vspace{-4mm}
\subsection{Related Work}
\pubidadjcol
FD has attracted lots of research interests especially for relay assisted cooperative communication. Traditionally, FD is considered to be infeasible
due to the practical difficulty to recover the desired signal which suffers from the   self-interference from the relay output, which could be as
high as 100 dB \cite{Day-MIMO}. It is shown in \cite{Riihonen-SPAWC-09}   that the FD   relaying in the presence of loop interference is indeed
feasible and can offer higher capacity than the HD mode. Experimental results are reported in  \cite{Duarte-10} that the self-interference can be
sufficiently cancelled to make FD wireless communication feasible in many cases; hardware implementations  in \cite{Sahai} show over 70\% throughput
gains from using the FD over the HD in realistically used cases. Since then, there have been substantial efforts on dealing with self-interference.
Utilizing multiantenna techniques,    \cite{Riihonen-asilomar} proposes to direct the self-interference of a DF relay in the FD mode to the least
harmful spatial dimensions.
 The authors of  \cite{Taneli-Mitigation} analyze a wide range of self-interference mitigation when the relay has multiple antenna, including natural isolation, time-domain cancellation and spatial
domain suppression. The techniques apply to general protocols including amplify-and-forward (AF) and decode-and-forward (DF).
   The transmitter/receiver dynamic-range limitations and channel estimation error at the MIMO DF relay is considered explicitly  in \cite{Day-MIMO-relaying},
and an FD transmission scheme is proposed to maximize a lower bound  of the end-to-end achievable rate by designing transmit covariance matrix.
Considering the tradeoff between residual interference in the FD mode and rate loss in the HD mode, in \cite{Taneli-Hybrid}, a hybrid FD/HD relaying
is proposed together with transmit power adaption to best select the most appropriate mode.  Relay selection is examined in \cite{Krikidis-icc}   in
AF cooperative communication with the FD operation, and shows that the FD relaying results in a zero diversity order despite the relay selection
process.

In the area of cooperative cognitive radio,  there have been very few works on the use of the FD mode. It is worth mentioning that a theoretical
upper-bound for the rate region was found in \cite{Devroye-bound}  \cite{Jafar-bound}\cite{Viswanath-bound}, where the CBS employs dirty paper coding
(DPC) to remove interference from the CU due to the primary signal. However,  DPC requires non-causal information about the primary message at the
CBS, in addition to its implementation complexity; therefore in practice, it is unknown how to achieve this region. FD for CR is first proposed in
\cite{Superposition}  where   the CBS uses AF protocol  and superposition at the CU to improve the rate region. { However, \cite{Superposition}
assumed that at the CBS,  the separation between the transmit and receive antennas is perfect and there is no self-interference, therefore it only
provides a performance upper bound for the FD.}
\subsection{Summary of Contributions}
 The aim of this paper is to study the achievable region using the FD CBS for a cooperative cognitive network taking into account of the
 self-interference.  We assume the primary system is passive, and always tries to operate in its full power. The CBS is  equipped with multiple
 antennas, and is smart enough for forwarding the primary signal, transmitting the cognitive signal and suppressing self-interference. Both AF and DF
 protocols are studied.
 We have made the following
 contributions:
 \begin{itemize}
    \item For CBS operating in the HD mode, we formulate the cognitive  rate maximization problem with constraints on the CBS power and
    the PU rate.  Closed-form solutions are derived.
   \item For CBS operating in the FD mode, we model the self-interference after cancellation due to CBS transmit noise and solve the same problem as the
    HD case for both fixed and scalable transmit noise power. Closed-form solutions are given for the former case and an efficient algorithm is
    developed for the latter by establishing a link between these two
    cases.
    \item We then propose a hybrid HD/FD scheme based on { mode selection and the simplified closed-form zero-forcing (ZF) solutions \cite{kit-zf}
    which nulls out interference between the primary and secondary systems}. Insights are
     given on the impact of system parameters.
    \item Our simulation results demonstrate the enlarged rate region, and substantial performance gain of the proposed FD and hybrid schemes
    compared to the HD mode. It is also verified that the  proposed hybrid scheme performs nearly as well as the best mode selection.
 \end{itemize}
Note that the proposed scheme is not restricted to cellular networks. It can be applied to general cognitive radio scenarios where  secondary
transmitters have multiple antennas with FD capabilities, such as ad hoc cognitive networks  \cite{Chen-06}\cite{Giannakis-11}.
\subsection{Notations}
Throughout this paper, the following notations will be adopted. Vectors and matrices are represented by boldface lowercase and uppercase letters,
respectively.    $\|\cdot\|$ denotes the Frobenius norm. $(\cdot)^\dag$ denotes the Hermitian operation of a vector or matrix.   $\qA\succeq \qzero$
means that $\qA$ is positive semi-definite. $\qI$ denotes an identity matrix of appropriate dimension. Finally, ${\bf x}\sim\mathcal{CN}({\bf
m},{\bf\Theta})$ denotes a vector $\qx$ of complex Gaussian elements with a mean vector of ${\bf m}$ and a covariance matrix of ${\bf\Theta}$.
\section{Baseline HD-CBS System Model and Optimization}
  \begin{figure}[t] 
  \centering
  \includegraphics[width=3in]{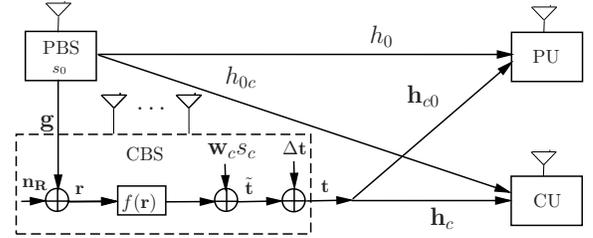}\vspace{-2mm}
  \caption{Cooperating Cognitive System Model in the HD mode}\label{fig:sys}
  \vspace{-5mm}
\end{figure}
\subsection{System Model}
 Consider a cooperative cognitive system shown in Fig. \ref{fig:sys}. The primary system consists of  a single-antenna PBS  and a single-antenna PU.
 The cognitive system includes an $N$-antenna ($N>1$) CBS operating in the HD mode and  a single-antenna CU
 \footnote{
An extension to multiple CUs for the  HD mode can be found in \cite{CCRN-Zheng}.
  Alternatively, interference alignment is a  promising tool in order to control interference in such cooperative cognitive systems.
  }.
It is assumed that cognitive system is time synchronized with the primary network (this assumption holds for all the investigated schemes).
 We assume that the quality of the  primary link is not good
 enough to meet its transmission rate target and the cooperation between the CBS and the PBS becomes necessary \cite{Ephremides-07}.
 To define the system model, we list the following system  parameters:
 \begin{eqnarray}
 h_{0} && \mbox{the scalar channel between the PBS and  the PU;}\nonumber  \\
{h_{0c}} && \mbox{the scalar channel between the PBS and  the CU;}\nonumber\\
 {\qh_{c0}} && \mbox{the $N\times 1$ channel between  the CBS and  the  PU;}\nonumber\\
 {\qh_{c}} && \mbox{the $N\times 1$ channel between  the CBS and   the CU;}\nonumber\\
 \qg && \mbox{the \mbox{$N\times 1$} channel between  the PBS and  the CBS;}\nonumber\\
  n_1 && \mbox{the noise received at PU during Phase I }\nonumber\\
  && \mbox{with $n_1 \sim \mathcal{CN}(0,1)$;}\nonumber\\
   n_2 && \mbox{the noise received at PU during Phase II }\nonumber\\
   && \mbox{with $n_2 \sim \mathcal{CN}(0,1)$;}\nonumber
  \end{eqnarray}
 \begin{eqnarray}
 \qn_R && \mbox{the $N\times 1$ noise vector received at  the  CBS during}\nonumber\\
       && \mbox{Phase I  with $\qn_R\sim \mathcal{CN}(\qzero, \qI)$;}\nonumber\\
  n_c && \mbox{the noise received at the CU during Phase II}\nonumber\\
  && \mbox{with $n_c\sim \mathcal{CN}(0,1)$;}\nonumber\\
  P_0 && \mbox{the available transmit power of  the PBS};\nonumber\\
 P_C && \mbox{the available transmit power  of  the CBS};\nonumber\\
  s_0 && \mbox{the transmit signal for  the PU with $s_0\sim \mathcal{CN}(0,P_0)$;}\nonumber\\
  s_c && \mbox{the transmit signal for the CU with $s_c\sim \mathcal{CN}(0,1)$}.\nonumber
\end{eqnarray}
  All transmit signal, channel and noise elements are assumed to be independent of each other.
{ We assume global perfect channel state information (CSI) is available at the CBS.}
  In the HD mode, the communication takes place in two phases.
 In Phase I, the PBS broadcasts its data $s_0$,  then the received signals at the PU and the CBS are, respectively\footnote{For the sake of presentation, the time slot index is omitted by the
instantaneous expressions of the HD case.},
 \be y_1=   h_0 s_0 + n_{1},~~ \mbox{and}~~\qr=  \qg  s_0 + \qn_R.\ee
 The CBS  processes the received signal and produces $f(\qr)$
 which is defined as

\vspace{-10pt}
{\small
 \be
    f(\qr) = \left\{\begin{array}{ll}
                \qA\qr, & \mbox{for AF where $\qA$ is an $N\times N$ complex}\\
                & \mbox{relay matrix;} \\
                \qw_0 \tilde s_0, & \mbox{for DF where $\qw_0$ is a beamforming vector}\\
                 & \mbox{to forward primary signal and $\tilde s_0 = \frac{s_0}{\sqrt{P_0}}$. }
             \end{array} \right.
 \ee
}%
In Phase II, the CBS superimposes the relaying signal $f(\qr)$ with its own data $s_c$   using the cognitive beamforming vector $\qw_c$, then
transmits to both the PU and the CU. In this phase, the PBS remains idle. The CBS's transmit signal is
 \be
    \tilde \qt = \left\{\begin{array}{ll}
                \qw_c s_c+  \qA\qg   s_0 +  \qA\qn_R, & \mbox{for AF;} \\
                \qw_c s_c + \qw_0 \tilde s_0, & \mbox{for DF. }
             \end{array} \right.
 \ee
 with average power
  \be
    p_R = {\tt E} \|\tilde\qt\|^2= \left\{\begin{array}{ll}
                \|\qw_c\|^2 +  P_0  \| \qA\qg\|^2  +  \|\qA \|^2, & \mbox{for AF;} \\
                \|\qw_c\|^2 + \|\qw_0\|^2, & \mbox{for DF. }
             \end{array} \right.
 \ee
{ To make a fair comparison with the FD mode, we introduce the   transmit noise $\Delta \qt$, which
 combines the effects of phase noise, nonlinear power amplifier, I/Q imbalance,   nonlinear low-noise amplifier and ADC
impairments \cite{Taneli-Mitigation}\cite{fettweis-05}, etc. \footnote{The considered imperfections are general and can also affect all receivers.
Given that the purpose of this work is to study the impact of transmit noise on the FD relaying operation, we assume ideal receivers.} Then the
actually transmitted signal from the CBS is
    \be \qt = \tilde \qt + \Delta
    \qt, ~\Delta\qt \sim \mathcal{CN}(\qzero, P_t\qI), \vspace{-2mm}
 \ee
 where   $P_t$ denotes the transmit noise power and  can either be fixed or scale with $p_R$, depending on how well these impairments are compensated.} It will be seen that in the HD mode, we can use the same approach to solve the problem no matter whether $P_t$ is fixed or not.  While in the
 FD mode using the AF protocol, it makes a difference and we
 will deal with these two cases separately.
 The received signal at the CU is

\vspace{-10pt}
{\small
 \bea
&&y_c =\qh_{c}^\dag\qt + n_c\\
    &&=\left\{\begin{array}{ll}
    \qh_{c}^\dag\qw_c s_c +
     \qh_{c}^\dag\qA\qg s_0 + \qh_{c}^\dag\qA\qn_R + \qh_{c}^\dag\Delta
     \qt+n_{c},&\mbox{for AF;} \\
      \qh_{c}^\dag  \qw_c s_c+  \qh_{c}^\dag \qw_0 \tilde s_0 +\qh_{c}^\dag\Delta \qt  +   n_{c},&\mbox{for DF.}
             \end{array} \right.\notag
\eea
}%
 The received signal-to-interference plus noise ratio (SINR) at CU   is then expressed as
 \be
    \Gamma_c  = \left\{\begin{array}{ll}
    \frac{|\qh_{c}^\dag\qw_c|^2}{    P_0  | \qh_{c}^\dag\qA\qg|^2  +
    \|\qh_{c}^\dag\qA \|^2 + P_t \|\qh_c\|^2 + 1 },& \mbox{for AF;} \\
      \frac{|\qh_{c}^\dag\qw_c|^2}{    |\qh_{c}^\dag\qw_0|^2  +P_t \|\qh_c\|^2+ 1 }, & \mbox{for DF,}
             \end{array} \right.
\ee and the achievable rate is $R_c = \frac{1}{2}\log_2(1+\Gamma_c)$ where the factor $\frac{1}{2}$ arises due to the two orthogonal channel uses.
The received signal at the PU is

{\small
 \bea
&&\hspace{-.65cm} y_2 = \qh_{c0}^\dag\qt+n_2 \\
&&\hspace{-.65cm}=\left\{\begin{array}{ll}
    \qh_{c0}^\dag  \qw_c s_c+ \qh_{c0}^\dag \qA\qg s_0 +  \qh_{c0}^\dag \qA\qn_s +\qh_{c0}^\dag \Delta\qt  +    n_{2},& \mbox{for AF;} \\
     \qh_{c0}^\dag\qw_c s_c  + \qh_{c0}^\dag  \qw_0 s_0+\qh_{c0}^\dag \Delta\qt + n_{2}, & \mbox{for DF.}
             \end{array} \right.\notag
\eea}%
Applying  maximum ratio combining (MRC) to $y_1$ and $y_2$, the received SINR of the PU is the sum  of two channel uses,
 and consequently, the achievable rate is given in (9) at the top of next page.
\begin{figure*}
 {\small
 \be
    R_0 =   \left\{\begin{array}{ll}
    \frac{1}{2}\log_2\left(1+P_0 |{h}_0|^2
    +  \frac{  P_0| \qh_{c0}^\dag\qA\qg|^2}
    { |\qh_{c0}^\dag\qw_c|^2+   \|\qh_{c0}^\dag \qA\|^2 + P_t \|\qh_{c0}\|^2   +
    1}\right),
    \mbox{for AF;}& \\
    \frac{1}{2}\log_2\left(1+ P_0 |h_0|^2 + \min\left(P_0\|\qg\|^2, \frac{|\qh_{c0}^\dag  \qw_0|^2}{|\qh_{c0}^\dag  \qw_c|^2+
      P_t \|\qh_{c0}\|^2+1} \right)\right), 
       \mbox{for DF.}&
             \end{array} \right.
 \ee}
 \hrule
\end{figure*}
\subsection{Achievable Rate Region and Problem Formulation}
 The problem of interest is to find the  rate region   given the primary and cognitive
 power constraints. To achieve this,  we propose to  maximize the CU rate $R_c$ subject to the PU's rate constraint $r_0$ and the CBS's transmit
 power constraint $P_C$, by jointly  optimizing the cognitive beamforming vector $\qw_c$, the relaying processing matrix $\qA$ and the forwarding
 beamforming vector $\qw_0$ for both AF and DF.  Mathematically, the optimization problem can be written as
  \bea
  \max_{}~  R_c ~~~~  \mbox{s.t.} ~~~R_0\ge r_0,~~ p_R\le P_C,
  \eea
  {where the optimization variables are $(\qw_c, \qA)$, $(\qw_c, \qw_0)$ for AF and DF, respectively.}
 Using the monotonicity between the received SINR and the achievable rate, we can derive simplified equivalent problem formulations  for AF and DF.
 More specifically, we have (11) at the top of next page
\begin{figure*}
 {\small
  \bea\label{eqn:prob:rate:max:AF:v0}
    \mbox{P-AF-HD:}~~  \max_{\qw_c,\qA} && \frac{|\qh_{c}^\dag\qw_c|^2}{    P_0  | \qh_{c}^\dag\qA\qg|^2  +
    \|\qh_{c}^\dag\qA \|^2 + P_t \|\qh_c\|^2 + 1 }  \\
    \mbox{s.t.}
   && \frac{ | \qh_{c0}^\dag\qA\qg|^2}
    { |\qh_{c0}^\dag\qw_c|^2+  \|\qh_{c0}^\dag \qA\|^2  + P_t \|\qh_{c0}\|^2   +    1} \ge \gamma_{0_{AF}}^{'},\notag\\
&&   \|\qw_c\|^2 +  P_0  \| \qA\qg\|^2  +  \|\qA \|^2 \le P_C,\notag
 \eea}
 \hrule
\end{figure*}
 and
   \bea\label{eqn:prob:rate:max:DF:v0}
  \mbox{P-DF-HD:} ~~ \max_{\qw_0,\qw_c} && \frac{|\qh_{c}^\dag  \qw_c|^2}{|\qh_{c}^\dag  \qw_0|^2+
 P_t \|\qh_c\|^2+1}   \\
    \mbox{s.t.}
    && P_0\|\qg\|^2 \ge \gamma_{0_{DF}}^{'},\notag\\
   &&   \frac{|\qh_{c0}^\dag  \qw_0|^2}{|\qh_{c0}^\dag  \qw_c|^2+ P_t\|\qh_{c0}\|^2 +1} \ge \gamma_{0_{DF}}^{'},\notag\\
&&\|\qw_c\|^2 +   \| \qw_0\|^2\le P_C, \notag
 \eea
 where $\gamma_{0_{AF}}^{'}\triangleq \frac{2^{2 r_0}-1}{P_0}-|h_0|^2$ and $\gamma_{0_{DF}}^{'}\triangleq  2^{2 r_0}-1 - P_0|{h}_0|^2$.
 Obviously, P-AF-HD appears more complicated than P-DF-HD, so we first focus on P-AF-HD and we later show that actually both problems
 can be solved using the same mechanism.
\vspace{-5mm}
\subsection{The Optimal Structure of $\qA$ in P-AF-HD and Physical Interpretation}
 Problem P-AF-HD involves the optimization of an $N\times N$ matrix $\qA$. In   the following theorem we will characterize the optimal structure of
 $\qA$.
\begin{theorem}\label{eqn:structure:A}
 The optimal AF relay matrix $\qA$ has the structure of
 \be\label{eqn:structure:A}
\qA = \qw_a\qg^\dag,
 \ee
 where   $\bar\qH\triangleq [\qh_{c0} ~\qh_c]$, $\qw_a=\bar\qH\qb$ and $\qb\in \mathbb{C}^{2\times 1}$ are  parameter vectors.
\end{theorem}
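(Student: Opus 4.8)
The plan is to exploit the fact that $\qA$ enters Problem P-AF-HD only through the bilinear terms $\qA\qg$, $\qh_c^\dag\qA\qg$, $\qh_{c0}^\dag\qA\qg$ and the quadratic terms $\|\qh_c^\dag\qA\|^2$, $\|\qh_{c0}^\dag\qA\|^2$, $\|\qA\|^2$, and to strip away by two successive orthogonal projections any part of $\qA$ that inflates the quadratic (power / amplified-noise) terms without contributing to the bilinear (useful-signal) ones. Throughout, $\qw_c$ is held fixed, since it never couples to these projections.

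First I would reduce $\qA$ to the rank-one form with right factor $\qg^\dag$. Let $\qP_\qg=\qg\qg^\dag/\|\qg\|^2$ be the orthogonal projector onto $\qg$, $\qP_\qg^\perp=\qI-\qP_\qg$, and write $\qA=\qA\qP_\qg+\qA\qP_\qg^\perp$. Because $\qP_\qg\qg=\qg$ and $\qP_\qg^\perp\qg=\qzero$, all three bilinear terms depend only on $\qA\qP_\qg$ and are unchanged when $\qA$ is replaced by $\qA\qP_\qg$. Since $\qP_\qg$ and $\qP_\qg^\perp$ have orthogonal ranges, the squared norms split additively, e.g. $\|\qA\|^2=\|\qA\qP_\qg\|^2+\|\qA\qP_\qg^\perp\|^2$ and $\|\qh_c^\dag\qA\|^2=\|\qh_c^\dag\qA\qP_\qg\|^2+\|\qh_c^\dag\qA\qP_\qg^\perp\|^2$ (and likewise for $\qh_{c0}$), so discarding $\qA\qP_\qg^\perp$ can only decrease $\|\qA\|^2$, $\|\qh_c^\dag\qA\|^2$ and $\|\qh_{c0}^\dag\qA\|^2$. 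Hence the objective and rate-constraint numerators stay fixed, their denominators do not increase, and the power-constraint left-hand side does not increase; the replacement is therefore feasible and no worse, and $\qA\qP_\qg=\qw_a\qg^\dag$ with $\qw_a=\qA\qg/\|\qg\|^2$.

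Next I would fix the direction of $\qw_a$. Substituting $\qA=\qw_a\qg^\dag$ and using $\qA\qg=\|\qg\|^2\qw_a$ and $\qh_c^\dag\qA=(\qh_c^\dag\qw_a)\qg^\dag$ (similarly for $\qh_{c0}$), one sees that $\qw_a$ now enters only through $\qh_c^\dag\qw_a$, $\qh_{c0}^\dag\qw_a$ and $\|\qw_a\|^2$. Decomposing $\qw_a=\bar\qH\qb+\qw_a^\perp$ with $\qw_a^\perp$ orthogonal to the column space of $\bar\qH=[\qh_{c0}~\qh_c]$, the two inner products satisfy $\qh_c^\dag\qw_a=\qh_c^\dag\bar\qH\qb$ and $\qh_{c0}^\dag\qw_a=\qh_{c0}^\dag\bar\qH\qb$, hence are unaffected by $\qw_a^\perp$, whereas $\|\qw_a\|^2=\|\bar\qH\qb\|^2+\|\qw_a^\perp\|^2$ can only decrease when $\qw_a^\perp$ is dropped. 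As $\|\qw_a\|^2$ appears only in the power constraint, this again preserves the objective and the rate constraint while relaxing the power budget, so an optimal solution admits $\qw_a=\bar\qH\qb$. Combining the two steps gives $\qA=\qw_a\qg^\dag$ with $\qw_a=\bar\qH\qb$, as claimed.

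The argument is a pair of orthogonal decompositions, so no single step is deep; the work is bookkeeping. The point needing the most care is to check that each occurrence of $\qA$ lands on the correct side of the inequalities, namely that the three bilinear terms are genuinely invariant under $\qP_\qg$ and the two inner products invariant under projection onto $\bar\qH$, while the several squared-norm terms are monotone under the orthogonal split, and, crucially, that these reductions never tighten a constraint: the rate-constraint numerator is preserved and its denominator non-increasing, and the power-constraint left-hand side only loses mass, so feasibility is maintained and each replacement is admissible.
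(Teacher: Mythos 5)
Your proof is correct and follows essentially the same route as the paper: the paper expands $\qA$ in one shot over the joint basis $[\bar\qH~\bar\qH^\bot]$ on the left and $[\qg~\qg^\bot]$ on the right and zeroes the off-blocks $\qC,\qd,\qE$, which is exactly the pair of orthogonal reductions you perform sequentially via $\qP_\qg$ and the projection onto the column space of $\bar\qH$. Your version is a somewhat more careful rendering of the same idea, since you explicitly verify that each replacement leaves the signal terms invariant, never tightens a constraint, and never decreases the objective, where the paper states this monotonicity informally.
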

The proof is given in Appendix A.  {The structure of $\qA=\qw_a\qg^\dag$  reveals its two functions.
 First it coherently amplifies the received primary signals at different receive antennas using MRC receiver $\qg^\dag$ to produce a noisy version of the primary signal;
 then fowards  the noisy primary signal using  the transmit beamforming vector $\qw_a$.} Theorem 1 not only provides physical interpretation about the above mentioned relay processing but also greatly
 simplifies problem P-AF-HD, which will be seen in the next subsection.
\subsection{Simplified AF Problem}
  It is noted that $\qA$ originally is a general $N\times N$ matrix while (\ref{eqn:structure:A}) indicates that it is actually a rank-1 matrix and
 can be represented by a new vector $\qw_a$.  Employing this structure, the problem P-AF-HD in (\ref{eqn:prob:rate:max:AF:v0}) is simplified to
 \bea\label{eqn:prob:rate:max:AF:v1}
    \min_{\qw_c,\qw_a} && \frac{|\qh_{c}^\dag\qw_c|^2}{ (P_0
\|\qg\|^4 + \|\qg\|^2) |\qh_{c}^\dag\qw_a|^2 +P_t \|\qh_c\|^2+1   }     \\
    \mbox{s.t.}
    &&    \frac{  |\qh_{c0}^\dag\qw_a|^2}    {  |\qh_{c0}^\dag\qw_c|^2 +P_t \|\qh_{c0}\|^2 +1} \ge \frac{\gamma_{0_{AF}}^{'}}{(\|\qg\|^4-\gamma_{0_{AF}}^{'}\|\qg\|^2)},\notag\\
 &&\|\qw_c\|^2 +  (P_0  \|\qg\|^4 +
\|\qg\|^2)\|\qw_a\|^2\le P_C.\notag
 \eea
 For the sake of presentation,  the  closed-form solution to the above problem in a general form is given in Appendix B. From the right hand side of
 the first constraint, it is observed that the problem is feasible or the required PU rate can be satisfied only when
 $\|\qg\|^2>\gamma_{0_{AF}}^{'}$, which complies with the common sense that the end-to-end performance of an AF relay system is upper bounded by the
 channel quality of the PBS-CBS link $\qg$; if this link is too weak, the CBS can not assist the primary transmission.

 For solving problem P-DF-HD in (\ref{eqn:prob:rate:max:DF:v0}),  we can first check whether the first constraint is
 satisfied; if yes, we can remove this constraint and the remaining of the problem has the same structure as (\ref{eqn:prob:rate:max:AF:v1}), then
 its solution is given  in Appendix B; otherwise, the problem is deemed infeasible meaning that the required primary rate cannot be supported even
 with the assistance from the CBS.
\section{FD-CBS System Model and Optimization}
\subsection{System Model}
 In this section, we consider that the  CBS operates in the FD mode, i.e., the CBS can receive  and transmit data  at the same time and frequency. We assume the CBS
has $N_r$ receive antennas/RF chains and  $N_t$ transmit antennas/RF chains. Due to the FD mode, the receive antennas of the CBS will receive a
self-interference from its transmit antennas. The system parameters that are different from the HD based system model are defined as follows:
 \begin{eqnarray}
 {\qh_{c0_{FD}}} && \mbox{the $N_t\times 1$ channel between the CBS and  the PU;}\nonumber\\
 {\qh_{c_{FD}}} && \mbox{the $N_t\times 1$ channel between the CBS and the  CU;}\nonumber\\
 \qg_{FD} && \mbox{\mbox{the $N_r\times 1$} channel between the PBS and the CBS;}\nonumber\\
 \qn_{R_{FD}} && \mbox{the $N_r\times 1$ noise vector received at the CBS}\nonumber\\
   && \mbox{with $\qn_{R_{FD}}\sim \mathcal{CN}(\qzero, \qI)$;}\nonumber\\
  n_0 && \mbox{the noise received at the PU  with $n_0 \sim \mathcal{CN}(0,1)$;}\nonumber\\
 \qH && \mbox{the equivalent $N_r\times N_t$ loop interference  channel }\nonumber\\
 && \mbox{matrix at the CBS;}\nonumber\\
  P_{C_{FD}} && \mbox{the transmit power constraint of the CBS;}\nonumber\\
  P_{0_{FD}} && \mbox{the transmit power  of the PBS}.\nonumber
\end{eqnarray}
We may reuse some variables from the HD mode when no confusion occurs. It is worth noting that $\qH$ is the equivalent loop interference channel
after the self-interference mitigation \cite{Duarte-10}\cite{Sahai}, and its strength depends on the quality of the mitigation process that can be
performed with   techniques such as antenna separation and analog-domain self-interference cancellation, etc.

Although FD and HD are characterized by fundamental operational and complexity differences, the purpose of our study is to provide a fair comparison
between the two relaying modes. This fair comparison is supported by the following assumptions:
 \begin{itemize}
    \item[i)]  We ensure the same  energy consumptions for both modes that is expressed as $P_{C_{FD}}=\frac{P_C}{2}$ and $P_{0_{FD}} = \frac{P_0}{2}$ \cite{Taneli-Hybrid}.
    \item[ii)] We assume the same number of total antennas i.e., an FD system  has $N_t$ transmit and $N_r$ receive antennas and an HD  system with a total of $N = N_r + N_t$ antennas. More specificaly:
    \begin{itemize}
    \item[1)]
         In the HD mode, the CBS has   $N_r$ receive RF chains and $N_t$ transmit RF chains  and  it uses  equal number of antennas:
               $N_r$ receive antennas in Phase I  and $N_t$ transmit antennas in Phase II. In this setting, both the FD and HD systems have the same
               antenna configuration.
    \item[2)]  In addition, we also consider a case that the HD  mode has higher cost:  in the HD mode, the CBS has  $N$ transmit RF chains and $N$ receive RF
    chains, so it can use all $N$  receive antennas in Phase I and all $N$ transmit   antennas in Phase II.
    \end{itemize}
\end{itemize}
     Obviously the performance of 2) is superior to that of 1) but with the cost of extra complexity. We will evaluate the performance of  both cases in
 Section V.

 The details of FD relay processing are illustrated in Fig. \ref{fig:sys:CBS}.
 \begin{figure}[t]
  \centering
  \includegraphics[scale=0.8]{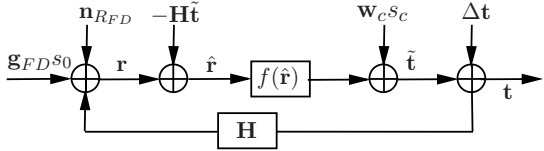}
  \caption{Relay processing at the FD CBS }\label{fig:sys:CBS}
  \vspace{-5mm}
\end{figure}
  We use the index $i$ to denote time instant.  The received signal at the PBS and the CBS are, respectively,
 \bea y[i]&=&   h_0  s_0[i] + \qh_{c0_{FD}}^\dag( \tilde \qt[i] + \Delta\qt[i] )  + n_{0}[i],\\
  \qr[i]&=&  \qg s_0[i] + \qH ( \tilde \qt[i] + \Delta\qt[i])  +
  \qn_{R_{FD}}[i],\eea
where $\tilde \qt[i] + \Delta\qt[i]$ is the actual transmitted signal from the CBS.  The transmit Gaussian noise vector is denoted as $\Delta\qt$ as
the HD mode, $P_t$ denotes the noise power and $\tilde\qt[i]$ is the known transmit signal from the CBS including both relaying primary signal and
cognitive signal:
  \be
      \tilde \qt[i] =  f(\hat \qr(i-D)) +  \qw_c s_c[i],
  \ee
 where $\hat\qr$ is the pre-processed received signal at the CBS and $D>0$ is the processing delay.
 This delay $D>0$ is a general
assumption and refers to the required processing time in order to implement the FD operation \cite{Taneli-Hybrid}. In practical systems the
processing delay for the AF scheme is much smaller than the one in the DF case; however our analysis is general and holds for any $D>0$.

 The relay processing at the CBS is defined as:

\vspace{-10pt}
{\small
  \be
    f(\hat\qr) = \left\{\begin{array}{cl}
                \qA\hat\qr, & \mbox{for AF where   $\qA$ is an $N_t\times N_r$ relay matrix;} \\
                \qw_0 \tilde s_0, & \mbox{for DF where   $\qw_0$ is a beamforming vector,}\\
                 & \mbox{$\tilde s_0=\frac{s_0}{\sqrt{P_{0_FD}}}$. }
             \end{array} \right.
 \ee
 }%
  Although the channel $\qH$ is perfectly estimated at the CBS and the CBS can perfectly remove the noise component $\qH \tilde \qt[i]$,
  but   the term $\qH \tilde \Delta\qt[i]$ remains and forms the residual interference that affects the CBS's input. As a result the CBS gets
 pre-processed signal
 \be
    \hat\qr[i] = \qr[i] - \qH\tilde\qt[i]= \qg s_0[i] + \qH  \Delta\qt[i]   +  \qn_{R_{FD}}[i].
 \ee
  Suppose the CBS further processes $\hat\qr[i]$, then
 \be
    \tilde \qt[i] = \left\{\begin{array}{ll} \qA\hat\qr[i-D] + \qw_c s_c[i]= \qA \qg s_0[i-D] +   \\
    \qA\qH   \Delta\qt[i-D] +
    \qA\qn_r[i-D] +  \qw_c s_c[i], & \\
     \mbox{for AF where $\qA$ is an $N\times N_r$  matrix}; & \\
    \qw_0 \tilde s_0[i-D]+ \qw_c s_c[i], &\\
     \mbox{for DF where~} \tilde s_0[i-D]=\frac{s_0[i-D]}{|s_0[i-D]|}& \end{array} \right.
 \ee
 with average power
 \be p_{R_{FD}} ={\tt E}(\|\tilde \qt\|^2)=\left\{\begin{array}{ll}   P_{0_{FD}}\|\qA \qg_{FD}\|^2   +P_t\|\qA\qH\|^2   +
   \|\qA\|^2  \\ + \|\qw_c\|^2, \mbox{for AF;}& \\
    \|\qw_0\|^2 + \|\qw_c\|^2, \mbox{for DF.}&  \end{array} \right.
 \ee
 The received signal at the   CU is
 {\small
   \bea
      &&y_c[i]  = \qh_{c_{FD}}^\dag (\tilde \qt[i]  + \Delta \qt[i] ) + {h}_{0c} s_0[i] +
    n_c[i] \notag\\
    &&= \left\{\begin{array}{ll}  \qh_{c_{FD}}^\dag \qw_c s_c[i]+ \qh_{c_{FD}}^\dag  \qA \qg s_0[i-D] + \qh_{c_{FD}}^\dag \qA\qH \Delta \qt \notag\\
     +
    \qh_{c_{FD}}^\dag \qA\qn_{R_{FD}}[i-D]  + \qh_{c_{FD}}^\dag\Delta \qt[i]  + h_{0c} s_0[i] +    n_c[i],&\\
     \mbox{for AF;}& \\
    \qh_{c_{FD}}^\dag \qw_c s_c[i]+  \qh_{c_{FD}}^\dag  \qw_0 s_0[i-D] +    \qh_{c_{FD}}^\dag\Delta \qt[i] \notag\\ + h_{0c} s_0[i] +    n_c[i] ,
    &\\
     \mbox{for DF.}& \end{array} \right.
   \eea}
 The received SINR $\Gamma_{c_{FD}}$ at CU is expressed in (\ref{eqn:gam_c_fd}).
 \begin{figure*}
  \be\label{eqn:gam_c_fd}
    \Gamma_{c_{FD}} = \left\{\begin{array}{ll}   \frac{|\qh_{c_{FD}}^\dag \qw_c|^2}{ P_{0_{FD}}|\qh_{c_{FD}}^\dag  \qA \qg|^2 +
     P_t\|\qh_{c_{FD}}^\dag \qA\qH\|^2+ P_t\|\qh_{c_{FD}}\|^2
     +   \|\qh_{c_{FD}}^\dag \qA\|^2 + P_{0_{FD}} |{h}_{0c}|^2 + 1},& \mbox{for AF;} \\
    \frac{|\qh_{c_{FD}}^\dag \qw_c|^2}{  |\qh_{c_{FD}}^\dag  \qw_0|^2 +  P_t\|\qh_{c_{FD}}\|^2       +  P_{0_{FD}} |h_{0c}|^2+ 1} ,
    & \mbox{for DF.} \end{array} \right.
  \ee
  \end{figure*}
  Then the achievable CU rate is $R_{c_{FD}} = \log_2(1+ \Gamma_{c_{FD}})$. {  It is worth nothing that due to the FD, the CU rate does not suffer from the loss
 of a prelog factor $\frac{1}{2}$ observed in the HD mode.} However, due to the residual self-interference, for AF, there is an additional interference term
 $P_t\|\qh_{c_{FD}}^\dag \qA\qH\|^2$ compared to the HD mode.

 The received signal at the PU is shown in (\ref{fig:yi_fd}).
  \begin{figure*}
 \bea\label{fig:yi_fd}
   y[i] &=&  h_0 s_0[i] + \qh_{c0_{FD}}^\dag( \tilde \qt[i]  \Delta \qt[i] )  +
   n_{0}[i]\notag\\
   &=&\left\{\begin{array}{ll}   h_0 s_0[i] + \qh_{c0_{FD}}^\dag  \qA    \qg s_0[i-D] + \qh_{c0_{FD}}^\dag  \qA\qH \Delta \qt[i]\\
    ~~~~~~~+\qh_{c0_{FD}}^\dag \qA \qn_r[i-D] + \qh_{c0_{FD}}^\dag \qw_c s_c[i]    + \qh_{c0_{FD}}^\dag\Delta \qt+
    n_{0}[i],& \mbox{for AF;} \\
     h_0 s_0[i] + \qh_{c0_{FD}}^\dag  \qw_0 s_0[i-D] +\qh_{c0_{FD}}^\dag \qw_c s_c[i]  
      + \qh_{c0_{FD}}^\dag\Delta \qt[i]+   n_{0}[i], & \mbox{for DF.}
      \end{array} \right.
   \eea
   \end{figure*}
  Since the direct link channel $h_0$ is weak, the PU simply treats
  $h_0 s_0[i]$ as noise and decodes $s_0[i-D]$, therefore, the received SINR at the PU is given in (\ref{eqn:gam_0_fd})
   \begin{figure*}
  \be\label{eqn:gam_0_fd}
    \Gamma_{0_{FD}} = \left\{\begin{array}{ll}
    \frac{P_{0_{FD}} |\qh_{c0_{FD}}^\dag  \qA    \qg|^2}{   P_{0_{FD}} |h_0|^2  + P_t \|\qh_{c0_{FD}}^\dag  \qA \qH  \|^2  + P_t\|\qh_{c0_{FD}}\|^2
     +  \| \qh_{c0_{FD}}^\dag  \qA \|^2
      +  |\qh_{c0_{FD}}^\dag \qw_c|^2 + 1},& \mbox{for AF;} \\
    \frac{P_{0_{FD}} |\qh_{c0_{FD}}^\dag \qw_0|^2}{  P_{0_{FD}} |{h}_{0}|^2 + P_t \|\qh_{c0_{FD}}\|^2      + |\qh_{c0_{FD}}^\dag \qw_c|^2 + 1 } ,
     & \mbox{for DF.} \end{array} \right.
  \ee
  \end{figure*}
   The achievable rate for the PU is provide in (\ref{eqn:R_FD}).
    \begin{figure*}
 \be\label{eqn:R_FD}
    R_{0_{FD}}  =   \left\{\begin{array}{ll}
    \log_2\left(1+\frac{P_{0_{FD}} |\qh_{c0_{FD}}^\dag  \qA    \qg|^2}{   P_{0_{FD}} |h_0|^2  + P_t \|\qh_{c0_{FD}}^\dag  \qA \qH  \|^2  +
    P_t\|\qh_{c0_{FD}}\|^2   +  \| \qh_{c0_{FD}}^\dag  \qA \|^2
        +  |\qh_{c0_{FD}}^\dag \qw_c|^2 + 1}\right),& \mbox{for AF;} \\
     \log_2\left(1 + \min\left(P_{0_{FD}}\|\qg_{FD}\|^2, \frac{P_{0_{FD}} |\qh_{c0_{FD}}^\dag \qw_0|^2}{  P_{0_{FD}} |{h}_{0}|^2 + P_t \|\qh_{c0_{FD}}\|^2
          + |\qh_{c0_{FD}}^\dag \qw_c|^2 + 1 }\right)\right), & \mbox{for DF.}
             \end{array} \right.
 \ee
  \end{figure*}
 \subsection{Problem Formulation}
 Similar to the HD case, we will study the achievable rate region by solving the CU SINR maximization problems subject to the PU rate constraint
 $r_0$ and the CBS power constraint $P_{C_{FD}}$. The problems for AF and DF are formulated as (\ref{eqn:prob:SINR:FD:AF:v0})
  \begin{figure*}
 {\small
  \bea\label{eqn:prob:SINR:FD:AF:v0}
    \mbox{P-FD-AF:}&&\notag\\
    ~\max_{\qA,\qw_c} &&   \frac{|\qh_{c_{FD}}^\dag \qw_c|^2}{ P_{0_{FD}}|\qh_{c_{FD}}^\dag  \qA \qg_{FD}|^2 +
     P_t\|\qh_{c_{FD}}^\dag \qA\qH\|^2+ P_t\|\qh_{c_{FD}}\|^2
     +   \|\qh_{c_{FD}}^\dag \qA\|^2 + P_{0_{FD}} |{h}_{0c}|^2 + 1}\\
    \mbox{s.t.} && \frac{P_{0_{FD}} |\qh_{c0_{FD}}^\dag  \qA    \qg|^2}{   P_{0_{FD}} |h_0|^2  + P_t \|\qh_{c0_{FD}}^\dag  \qA \qH  \|^2
     + P_t\|\qh_{c0_{FD}}\|^2   +  \| \qh_{c0_{FD}}^\dag  \qA \|^2
       +  |\qh_{c0_{FD}}^\dag \qw_c|^2 + 1}\notag\\&&\ge \gamma_{0_{FD}} \triangleq 2^{\gamma_0}-1,\notag\\
    && p_{R_{FD}}=P_{0_{FD}}\|\qA \qg_{FD}\|^2   +    P_t\|\qA\qH\|^2   +
   \|\qA\|^2   + \|\qw_c\|^2  \le P_{C_{FD}},\notag
  \eea}
  \end{figure*}
 and (27) on next page.
\begin{figure*}
{\small
    \bea\label{eqn:prob:SINR:FD:DF:v0}
     \mbox{P-FD-DF:}&&\notag\\~\max_{\qw_0,\qw_c} && \frac{|\qh_{c_{FD}}^\dag \qw_c|^2}{  |\qh_{c_{FD}}^\dag  \qw_0|^2 +    P_t \|\qh_{c_{FD}}\|^2
        +  P_{0_{FD}} |{h}_{0c}|^2   +
    1}\\
    \mbox{s.t.} &&
      \frac{P_{0_{FD}}\|\qg_{FD}\|^2}{ P_t   \|\qH\|^2+1}   \ge  \gamma_{0_{FD}} \notag\\
  &&  \frac{|\qh_{c0_{FD}}^\dag \qw_0|^2}{  |\qh_{c0_{FD}}^\dag \qw_c|^2   + P_t   \|\qh_{c0_{FD}}\|^2      +
  P_{0_{FD}} |{h}_{0}|^2  + 1 } \ge \gamma_{0_{FD}} \notag\\
    && p_{R_{FD}}=\|\qw_0\|^2   + \|\qw_c\|^2\le  P_{C_{FD}}.\notag
  \eea}
\end{figure*}
 Comparing to P-HD-AF, it can be checked that the optimal relay processing matrix $\qA$ in P-FD-AF possesses the same structure $\qA =
 \qw_a\qg_{FD}^\dag$ as HD in Theorem 1 and as a result, the problem P-FD-AF is reformulated  as (\ref{eqn:prob:SINR:FD:AF:v1}).
  \begin{figure*}
   \bea\label{eqn:prob:SINR:FD:AF:v1}
    \max_{\qw_a,\qw_c} && \frac{|\qh_{c_{FD}}^\dag \qw_c|^2}{  (1+P_{0_{FD}} |h_{0c}|^2+P_t \|\qh_{c_{FD}}\|^2 ) + (P_{0_{FD}}\|\qg_{FD}\|^4 +   P_t\|\qg_{FD}^\dag\qH\|^2 + \|\qg_{FD}\|^2 )
    |\qh_{c_{FD}}^\dag\qw_a\|^2   }\\
    \mbox{s.t.} && \frac{    |\qh_{c0_{FD}}^\dag  \qw_a|^2}{ (1+P_t \|\qh_{c0_{FD}}\|^2 +  P_{0_{FD}} |h_0|^2 )        +  |\qh_{c0_{FD}}^\dag \qw_c|^2    }\ge
     \frac{\gamma_{0_{FD}} }{(P_{0_{FD}}\|\qg_{FD}\|^4  -\gamma_{0_{FD}} (\|\qg_{FD}\|^2+P_t\|\qg_{FD}^\dag\qH\|^2) )}\notag\\
    && (P_{0_{FD}}\|\qg_{FD}\|^4 + \|\qg_{FD}\|^2   +   P_t\|\qg_{FD}^\dag\qH\|^2 ) \|\qw_a\|^2   + \|\qw_c\|^2  \le
    P_{C_{FD}}.\notag
  \eea
  \end{figure*}
\subsection{Fixed Transmit Noise $P_t$}
 In our previous discussions, we have assumed that $P_t$ is fixed which corresponds to an efficient interference cancellation process. With this
 assumption, (\ref{eqn:prob:SINR:FD:AF:v1}) has the same structure as (\ref{eqn:prob:rate:max:AF:v1}) for AF, and (\ref{eqn:prob:SINR:FD:DF:v0}) and
 (\ref{eqn:prob:rate:max:DF:v0}) share the same structure for DF,
 therefore all solutions can be found using the approach presented in Appendix B.

\subsection{Scalable Transmit Noise $P_t$}
Although  fixing the transmit noise power $P_t$  simplifies the problem   and the solution,   in practice, it is more feasible to assume that $P_t$
scales with the CBS transmit power, i.e., $P_t=\epsilon^2 p_R$ and $P_t=\epsilon^2 p_{R_{FD}}$ for the HD and FD modes, respectively, where
$\epsilon^2$   is a scaling factor and  denotes the percentage of the transmit noise power to the total CBS transmit power. It  depends on the
hardware impairments and can be assumed that it is small for efficient implementations. In this case, the CBS may not use its full power since more
power brings more noise to the receivers in both modes and more self-interference in the FD mode.

Notice that in the HD mode, we have not discussed this issue for the problem formulations   P-DF-HD in (\ref{eqn:prob:rate:max:DF:v0}) and P-AF-HD
in (\ref{eqn:prob:rate:max:AF:v1}), respectively, and the reason is as follows. When $P_t$ scales with $p_R$, the objective functions in both
 (\ref{eqn:prob:rate:max:DF:v0}) and (\ref{eqn:prob:rate:max:AF:v1}) are  non-decreasing functions of $p_R$, which means the relay should always use
the maximum transmit power $p_R = P_C$. We illustrate this by taking the problem P-AF-HD for example. With $P_t=\epsilon^2 p_R$, it becomes

\vspace{-10pt}
{\small
  \bea\label{eqn:prob:rate:max:AF:v:scale}
  \max_{\qw_c,\qA,p_R} && \frac{|\qh_{c}^\dag\qw_c|^2}{    P_0  | \qh_{c}^\dag\qA\qg|^2  +
    \|\qh_{c}^\dag\qA \|^2 + \epsilon^2 p_{R} \|\qh_c\|^2 + 1 }  \\
    \mbox{s.t.}
   && \frac{ | \qh_{c0}^\dag\qA\qg|^2}
    { |\qh_{c0}^\dag\qw_c|^2+  \|\qh_{c0}^\dag \qA\|^2  + \epsilon^2 p_{R} \|\qh_{c0}\|^2   +    1} \ge \gamma_{0_{AF}}^{'},\notag\\
&&   p_R=\|\qw_c\|^2 +  P_0  \| \qA\qg\|^2  +  \|\qA \|^2 \le
P_C.\notag
 \eea
}%
 {Suppose its optimal solution is $(\qw_c^*, \qA^*, p_R^*)$ and the corresponding optimal objective value is $\gamma_c^*$.  We assume the CBS does not
 use maximum transmit power, i.e., $p_R^* = \frac{1}{\alpha} P_C, \alpha>1$.  Then we construct another solution $(\sqrt{\alpha}\qw_c^*,
 \sqrt{\alpha}\qA^*, \alpha p_R^*)$, which satisfies both constraints and  gives higher objective value  $\frac{|\qh_{c}^\dag\qw_c^*|^2}{    P_0  |
 \qh_{c}^\dag\qA^*\qg|^2  + \|\qh_{c}^\dag\qA^* \|^2 + \epsilon^2 p_{R}^* \|\qh_c\|^2 + \frac{1}{\alpha}}>\gamma_c^*$. This contradicts the fact that
 $(\qw_c^*, \qA^*, p_R^*)$ is the optimal solution,  therefore it must hold that $p_R=P_C$.}

 For the FD mode,  we next show that the scalable noise does not affect the approach  to solve  the problem  P-FD-DF in
(\ref{eqn:prob:SINR:FD:DF:v0}). With substitution $P_t=\epsilon^2
p_{R_{FD}}$, it is easy to see that the first constraint is
equivalent to \be
   p_{R_{FD}}\le  \frac{\frac{P_{0_{FD}}\|\qg_{FD}\|^2}{\gamma_{0_{FD}}} -
   1}{\epsilon^2\|\qH\|^2}.
\ee
 Then P-FD-DF  becomes (31).
 \begin{figure*}
    {\small \bea\label{eqn:prob:SINR:FD:DF:v1}
  &&\max_{\qw_0,\qw_c, p_{R_{FD}}} ~ \frac{|\qh_{c_{FD}}^\dag \qw_c|^2}{
|\qh_{c_{FD}}^\dag  \qw_0|^2 +    \epsilon^2 p_{R_{FD}} \|\qh_{c_{FD}}\|^2 +  P_{0_{FD}} |{h}_{0c}|^2   +
    1}\notag\\
  && \mbox{s.t.}~ \frac{|\qh_{c0_{FD}}^\dag \qw_0|^2}{  |\qh_{c0_{FD}}^\dag \qw_c|^2   + \epsilon^2
p_{R_{FD}}   \|\qh_{c0_{FD}}\|^2      +  P_{0_{FD}} |{h}_{0}|^2  + 1 } \ge \gamma_{0_{FD}}, \notag\\
    && p_{R_{FD}}=\|\qw_0\|^2   + \|\qw_c\|^2\le  \bar P_{C_{FD}}\notag\\
    && \triangleq  \min \left(P_{C_{FD}},
    \max\left(0,\frac{\frac{P_{0_{FD}}\|\qg_{FD}\|^2}{\gamma_{0_{FD}}} -
   1}{\epsilon^2\|\qH\|^2}\right)\right).
  \eea}
\end{figure*}
 Problem (\ref{eqn:prob:SINR:FD:DF:v1}) has the similar structure as (\ref{eqn:prob:rate:max:AF:v:scale}) and at the optimum, it must hold that
 $p_{R_{FD}}=\bar P_{C_{FD}}$. To summarize, the scalable transmit noise power does not affect the mechanism  to solve the problems for the HD mode
 and the FD mode with   DF relaying protocol.

However, the above remark may not be true for the FD mode when AF protocol is used since the CBS  amplifies the received noise and more transmit
power results in more self-interference. We will study this problem in the remaining of this section.

With substitution  $P_t=\epsilon^2 p_{R_{FD}}$,  problem (\ref{eqn:prob:SINR:FD:AF:v1}) is updated to (\ref{eqn:prob:SINR:FD:AF:v2}).
 \begin{figure*}
{\small   \bea\label{eqn:prob:SINR:FD:AF:v2}
  \mathbb{P}1: &&\max_{\qw_a,\qw_c,  p_{R_{FD}}}  \frac{|\qh_{c_{FD}}^\dag \qw_c|^2}{  (1+P_{0_{FD}} |h_{0c}|^2+\epsilon^2 p_R
  \|\qh_{c_{FD}}\|^2 ) + (P_{0_{FD}}\|\qg_{FD}\|^4 +    \epsilon^2  p_{R_{FD}}\|\qg_{FD}^\dag\qH\|^2 + \|\qg_{FD}\|^2 )
    |\qh_{c_{FD}}^\dag\qw_a\|^2   }\notag\\
\mbox{s.t.} &&\frac{    |\qh_{c0_{FD}}^\dag  \qw_a|^2}{ (1+ \epsilon^2 p_{R_{FD}} \|\qh_{c0_{FD}}\|^2 +  P_{0_{FD}} |h_0|^2 ) + |\qh_{c0_{FD}}^\dag
\qw_c|^2 }\notag \ge \frac{\gamma_{0_{FD}}}{(P_{0_{FD}}\|\qg_{FD}\|^4  -\gamma_{0_{FD}}
    (\|\qg_{FD}\|^2+\epsilon^2  p_{R_{FD}}\|\qg_{FD}^\dag\qH\|^2) )},\notag\\
    &&p_{R_{FD}} = (P_{0_{FD}}\|\qg_{FD}\|^4 + \|\qg_{FD}\|^2   +    \epsilon^2  p_{R_{FD}}\|\qg_{FD}^\dag\qH\|^2 ) \|\qw_a\|^2   + \|\qw_c\|^2  \le
    P_{C_{FD}}.
  \eea}
  \end{figure*}
   Problem $\mathbb{P}1$ is quite complicated since the CBS power constraint is not always active and it involves the product  of two quadratic
  terms. We denote its objective value as a function of available CBS power $P_{C_{FD}}$, i.e., $\Psi(P_{C_{FD}})$. To solve it, we first
 focus on the following problem $\mathbb{P}2$ whose objective is $\Phi(P)$, a function  of a  parameter
  $P$ in (\ref{eqn:prob:SINR:FD:AF:3}).
 \begin{figure*}
  {\small
   \bea\label{eqn:prob:SINR:FD:AF:3}
  \mathbb{P}2:&& \Phi(P)=\max_{\qw_a,\qw_c}\frac{|\qh_{c_{FD}}^\dag \qw_c|^2}{  (1+P_{0_{FD}} |h_{0c}|^2+\epsilon^2P \|\qh_{c_{FD}}\|^2 ) + (P_{0_{FD}}\|\qg_{FD}\|^4 +
   \epsilon^2 P\|\qg_{FD}^\dag\qH\|^2 + \|\qg_{FD}\|^2 )
    |\qh_{c_{FD}}^\dag\qw_a\|^2   }\notag\\
    \mbox{s.t.}&&  \frac{    |\qh_{c0_{FD}}^\dag  \qw_a|^2}{ (1+ \epsilon^2 P\|\qh_{c0_{FD}}\|^2 +  P_{0_{FD}} |h_0|^2 )        +  |\qh_{c0_{FD}}^\dag \qw_c|^2    }\ge
     \frac{\gamma_{0_{FD}}}{(P_{0_{FD}}\|\qg_{FD}\|^4  -\gamma_{0_{FD}}(\|\qg_{FD}\|^2+\epsilon^2P\|\qg_{FD}^\dag\qH\|^2) )},\notag\\
    &&  (P_{0_{FD}}\|\qg_{FD}\|^4 + \|\qg_{FD}\|^2   +    \epsilon^2 P\|\qg_{FD}^\dag\qH\|^2 ) \|\qw_a\|^2   + \|\qw_c\|^2  \le   P.
  \eea}
  \end{figure*}
   The following theorem characterizes  the relation between $\mathbb{P}1$ and
   $\mathbb{P}2$.
   \begin{theorem}
    Assuming that $\mathbb{P}1$ is feasible, it can be solved by considering $\mathbb{P}2$,
    i.e.,
    $\Psi(P_{C_{FD}}) = \max_{0\le P\le P_{C_{FD}}}\Phi(P)$.
    \end{theorem}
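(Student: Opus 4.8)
The plan is to exploit the fact that in $\mathbb{P}1$ the quantity $p_{R_{FD}}$ plays two distinct roles: through $P_t=\epsilon^2 p_{R_{FD}}$ it sets the residual-interference (noise) level that enters both the objective and the rate constraint, and through the last line it is simultaneously forced to equal the actual transmit power and to stay below $P_{C_{FD}}$. Problem $\mathbb{P}2$ decouples these two roles via the auxiliary parameter $P$: it freezes the noise level at $\epsilon^2 P$ while capping the transmit power by the same $P$. I would establish the claimed identity by proving the two inequalities $\Psi(P_{C_{FD}})\le\max_{0\le P\le P_{C_{FD}}}\Phi(P)$ and $\Psi(P_{C_{FD}})\ge\max_{0\le P\le P_{C_{FD}}}\Phi(P)$ separately.

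For the first inequality, let $(\qw_a^\star,\qw_c^\star,p_{R_{FD}}^\star)$ be optimal for $\mathbb{P}1$ and set $P:=p_{R_{FD}}^\star$. Since $p_{R_{FD}}^\star\le P_{C_{FD}}$, this $P$ is admissible. Substituting $\epsilon^2 P=\epsilon^2 p_{R_{FD}}^\star$ makes the objective and the rate constraint of $\mathbb{P}2(P)$ coincide exactly with those of $\mathbb{P}1$ evaluated at the optimum, while the power constraint of $\mathbb{P}2(P)$ becomes the transmit-power expression $\le P=p_{R_{FD}}^\star$, which is precisely the self-consistent equality defining $p_{R_{FD}}^\star$. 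Hence $(\qw_a^\star,\qw_c^\star)$ is feasible for $\mathbb{P}2(P)$ and attains the value $\Psi(P_{C_{FD}})$, so $\Phi(P)\ge\Psi(P_{C_{FD}})$, and a fortiori the maximum over $P$ dominates $\Psi(P_{C_{FD}})$.

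For the reverse inequality I first establish the key auxiliary fact that the power budget is always exhausted at an optimum of $\mathbb{P}2(P)$. This mirrors the scaling argument already used for the half-duplex scalable case in (\ref{eqn:prob:rate:max:AF:v:scale}): because the noise level is \emph{frozen} at $\epsilon^2 P$, replacing an optimal pair $(\qw_a,\qw_c)$ by $(\sqrt{\alpha}\qw_a,\sqrt{\alpha}\qw_c)$ with $\alpha>1$ scales every $\qw$-dependent term by $\alpha$ but leaves the constant terms $1+P_{0_{FD}}|h_{0c}|^2+\epsilon^2 P\|\qh_{c_{FD}}\|^2$ and $1+\epsilon^2 P\|\qh_{c0_{FD}}\|^2+P_{0_{FD}}|h_0|^2$ untouched. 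Since a ratio of the form $\alpha N/(C+\alpha D)$ with $C>0$ strictly increases in $\alpha$, this scaling strictly raises the objective and preserves the rate constraint, so any optimal $(\qw_a^\sharp,\qw_c^\sharp)$ of $\mathbb{P}2(P)$ must meet the power constraint with equality, i.e. its actual transmit power equals $P$. Setting $p_{R_{FD}}:=P$ then makes $(\qw_a^\sharp,\qw_c^\sharp,P)$ self-consistent and feasible for $\mathbb{P}1$ (with $P\le P_{C_{FD}}$) and attains $\Phi(P)$; hence $\Psi(P_{C_{FD}})\ge\Phi(P)$ for every admissible $P$, and taking the maximum yields the claim.

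The main obstacle is this activeness lemma in the second direction. The delicate point is that the term $\epsilon^2 P\|\qg_{FD}^\dag\qH\|^2$ multiplying $\|\qw_a\|^2$ in the power expression is held fixed (it uses the parameter $P$, not the scaled power), so the power expression is genuinely linear in the scaling and the implicit self-consistency of $\mathbb{P}1$ is replaced by a clean linear cap; this is exactly what makes the decoupling valid and the scaling monotone. Feasibility of $\mathbb{P}1$ is invoked to guarantee that the denominators $P_{0_{FD}}\|\qg_{FD}\|^4-\gamma_{0_{FD}}\bigl(\|\qg_{FD}\|^2+\epsilon^2 P\|\qg_{FD}^\dag\qH\|^2\bigr)$ on the right-hand side of the rate constraint remain positive over the relevant range of $P$, so that $\Phi(P)$ is well defined there and the two constructions above stay within the feasible set.
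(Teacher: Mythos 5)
Your proof is correct and takes essentially the same approach as the paper's: the first inequality is the paper's observation (i) verbatim, and the second is the paper's observation (ii), where the power-constraint activeness of $\mathbb{P}2$ is shown by the same scaling argument the paper invokes by reference to its half-duplex problem (\ref{eqn:prob:rate:max:AF:v:scale}). Your write-up merely spells out that scaling step (and the denominator-positivity role of feasibility) explicitly.
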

  \begin{proof}
  The proof is based on the following two observations.
  \begin{itemize}
  \item [i)] Suppose the  optimal solution to $\mathbb{P}1$ is given by $(\qw_0^*,\qw_c^*,p_R^*)$ and it is easy to see that $\Psi(P_{C_{FD}}) =
   \Phi(p_R^*)\le \max_{0\le P\le P_{C_{FD}}}\Phi(P)$.
   \item [ii)] {  On the other hand, given an input power $P\le P_{C_{FD}}$, we can solve $\mathbb{P}2$ to obtain $(\qw_0^p, \qw_c^p)$ and suppose its objective
   is $\Phi(P)$. Following the same argument for (\ref{eqn:prob:rate:max:AF:v:scale}), we can see that with the optimal solution, the last constraint
   of $\mathbb{P}2$ must be satisfied  with equality. Therefore $(\qw_0^p, \qw_c^p, P)$  is also  a feasible solution to $\mathbb{P}1$, and this
   implies that $\Phi(P)\le \Psi(P_{C_{FD}}), \forall~ 0\le P\le P_C$.}
 \end{itemize}
   Combining the above two facts, we   conclude that $\Psi(P_{C_{FD}})$ equals
   the maximum of $\Phi(P), 0\le P\le P_{C_{FD}}$.
  \end{proof}
 Theorem 2 indicates that  in order
 to solve the difficult problem $\mathbb{P}1$, it suffices to solve $\mathbb{P}2$ by 1-D search of
 $P$.

\subsection{Implementation Issues}

The implementation of the proposed scheme requires that the CBS can track the CBS-CU, CBS-PU, PBS-CU channels as well as the self-interference channel. The estimation of these parameters can be obtained by using appropriate pilot signals that periodically are sent by the terminals.  More specifically,
\begin{itemize}
\item
The (residual) self-interference channel can be estimated based on a pilot sequence that is sent from the CBS in periodical time instances. In
\cite{FD-WiFi}, the authors implement a pilot-based self-interference estimation mechanism for an FD scheme that incorporates analogue and digital
self-interference mitigation.
\item The estimation of the
CBS-CU and PBS-CU channels in a cognitive radio scenario has been proposed in \cite[Sec. V. D]{Viswanath-bound}. Based on that work, the PBS-CU
channel is firstly estimated at the CU by overhearing the primary radio's pilot signal; then is fed back to the CBS by using the CBS-CU link or a
dedicated out-of-band channel. It is worth noting that this operation requires a synchronization of the CU to the primary radio's pilot signal. The
channel CBS-CU can be estimated at the CU by using the cognitive radio's pilot signal and then is fed back to the CU.
\item In addition to the cognitive implementation in \cite{Viswanath-bound}, the proposed scheme requires also the CBS-PU channel; this information can be
obtained by introducing a periodical pilot signal at the PU (for the purposes of the cognitive cooperation) or by employing blind channel estimation
techniques \cite{heath-blind} at the CBS during the PU transmission.
\end{itemize}

It is worth noting that imperfections on the channel estimation result in performance degradation for the proposed scheme. Since the main objective
of this paper is to introduce a new FD-based cooperative scheme in a cognitive radio context,   we assume perfect channel knowledge as in \cite{Viswanath-bound}. Our
work provides useful performance bounds and serves as a  guideline for practical implementations with realistic channel estimation.

\section{Hybrid HD/FD Mode Selection For the CBS} 
 Although the CBS in the FD mode can improve the rate, it introduces an extra self-interference from the relay's output to the relay's input; on the other hand HD is
 not affected by self-interference due to the orthogonal transmission, but it reduces spectral efficiency. Therefore, no mode is always better than
 the other one and a hybrid solution that switches between the two operation modes can provide extra performance gain. To achieve this, one can
 simply solve each problem for the HD and FD modes, and then choose the better one. However, the closed-form solution given in Appendix B is very complex
 and does not give insights on which mode is preferred under different conditions. In this section, we will develop a simple suboptimal solution
 based on the ZF criterion, which will be used for mode selection. Towards this, we
 first state the following lemma:
 \begin{lemma}
    For both HD and FD modes, the  DF relaying protocol achieves
    higher CU rate than using AF protocol when the direct link $h_0\approx 0$.
 \end{lemma}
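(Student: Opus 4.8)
The plan is to prove the claim by an explicit dominance construction: starting from the optimal AF solution I build a feasible DF solution that delivers the same primary-signal component to both receivers while discarding the amplified noise (and, in the FD case, the amplified residual self-interference) that AF is forced to forward. Throughout I work with the rank-one formulations, i.e. P-DF-HD in (\ref{eqn:prob:rate:max:DF:v0}) versus (\ref{eqn:prob:rate:max:AF:v1}) for HD, and P-FD-DF in (\ref{eqn:prob:SINR:FD:DF:v0}) versus (\ref{eqn:prob:SINR:FD:AF:v1}) for FD, after substituting the optimal structure $\qA=\qw_a\qg^\dag$ ($\qA=\qw_a\qg_{FD}^\dag$) established in Theorem~1. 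The role of the hypothesis $h_0\approx 0$ is to place us in the regime where cooperation is genuinely needed: it guarantees $\gamma_{0_{AF}}^{'}\approx\frac{2^{2r_0}-1}{P_0}>0$ (and likewise $\gamma_{0_{DF}}^{'}>0$), so the PU rate constraint forces the relay to actively forward the primary signal. This is precisely what makes AF's noise-amplification penalty bite; when $h_0$ is large the direct link alone meets the PU target, the relay need not forward, and the two protocols coincide.

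First I would fix the optimal AF variables $(\qw_a^\star,\qw_c^\star)$ and define the DF candidate by keeping $\qw_c=\qw_c^\star$ and setting $\qw_0=\sqrt{P_0}\,\|\qg\|^2\,\qw_a^\star$ in the HD case (and $\qw_0=\sqrt{P_{0_{FD}}}\,\|\qg_{FD}\|^2\,\qw_a^\star$ in the FD case). This is the ``noiseless AF'' equivalent: an AF relay with $\qA=\qw_a\qg^\dag$ forwards $\qw_a(\|\qg\|^2 s_0+\qg^\dag\qn_R)$, so the DF beamformer $\qw_0$ above reproduces the useful term $\qw_a\|\qg\|^2 s_0$ exactly while dropping the amplified-noise term $\qw_a\qg^\dag\qn_R$.

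The bulk of the work is the feasibility check for this candidate. For the power constraint, a direct substitution shows the DF transmit power equals the AF transmit power minus the strictly positive overhead $\|\qg\|^2\|\qw_a^\star\|^2$ (and, in FD, also minus $P_t\|\qg_{FD}^\dag\qH\|^2\|\qw_a^\star\|^2$) that AF spends forwarding noise, so the DF budget is met with room to spare. For the PU SINR constraint, I would start from the active AF constraint (the first line of (\ref{eqn:prob:rate:max:AF:v1}), resp. (\ref{eqn:prob:SINR:FD:AF:v1})), multiply both sides by $P_0\|\qg\|^4$ (resp. $P_{0_{FD}}\|\qg_{FD}\|^4$), and observe that the resulting prefactor is $\ge 1$: in HD this uses $\|\qg\|^2>\gamma_{0_{AF}}^{'}$ so that $\frac{\|\qg\|^2}{\|\qg\|^2-\gamma_{0_{AF}}^{'}}\ge 1$ together with the identity $\gamma_{0_{DF}}^{'}=P_0\gamma_{0_{AF}}^{'}$; in FD it uses positivity of the denominator $P_{0_{FD}}\|\qg_{FD}\|^4-\gamma_{0_{FD}}(\|\qg_{FD}\|^2+P_t\|\qg_{FD}^\dag\qH\|^2)$. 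Since the PU SINR denominators of the AF and constructed-DF expressions are identical (both carry the $P_{0_{FD}}|h_0|^2$ term in FD), the constructed point satisfies the DF PU constraint. With feasibility in hand, the objective comparison is immediate: the DF CU SINR denominator equals the AF one minus the strictly positive term $\|\qg\|^2|\qh_c^\dag\qw_a^\star|^2$ (plus $P_t\|\qg_{FD}^\dag\qH\|^2|\qh_{c_{FD}}^\dag\qw_a^\star|^2$ in FD), so the DF SINR, hence the DF CU rate, exceeds that of AF; taking the DF optimum only increases it further.

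The step I expect to be the main obstacle is the \emph{first} DF constraint, the primary-signal decodability at the relay. In HD this constraint, $P_0\|\qg\|^2\ge\gamma_{0_{DF}}^{'}$, follows for free from AF feasibility via $\gamma_{0_{DF}}^{'}=P_0\gamma_{0_{AF}}^{'}$. In FD, however, the decodability condition $\frac{P_{0_{FD}}\|\qg_{FD}\|^2}{P_t\|\qH\|^2+1}\ge\gamma_{0_{FD}}$ involves the full self-interference norm $\|\qH\|^2$, whereas AF feasibility only involves the smaller quantity $\|\qg_{FD}^\dag\qH\|^2/\|\qg_{FD}\|^2\le\|\qH\|^2$; thus AF feasibility does \emph{not} by itself imply FD decodability. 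I would handle this by noting that this condition is independent of the beamformers $(\qw_a,\qw_c,\qw_0)$ and of the objective, so it is a standing channel/power precondition: the comparison is made in the regime where DF is applicable, i.e. where this decodability holds, and under that hypothesis the dominance construction goes through unchanged. Finally, I would remark that the strictness of the inequality is exactly the content of the $h_0\approx0$ assumption, since otherwise the primary constraint is inactive, no primary signal is forwarded, and AF and DF achieve the same CU rate.
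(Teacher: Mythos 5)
Your proposal is correct and follows essentially the same route as the paper: the paper also fixes the optimal AF solution $(\qw_a^*,\qw_c^*)$ and scales $\qw_a^*$ into a feasible DF forwarding beamformer (the paper uses $\qw_0^*=\sqrt{P_0\|\qg\|^4+\|\qg\|^2}\,\qw_a^*$, so that the DF objective exactly equals the AF one, whereas your $\qw_0=\sqrt{P_0}\|\qg\|^2\qw_a^\star$ drops the noise-forwarding power and yields a no-smaller objective), and it verifies the PU constraint by the same chain using $\gamma_{0_{DF}}^{'}=P_0\gamma_{0_{AF}}^{'}$ under $h_0\approx 0$. The only substantive difference is that the paper dismisses the FD case with ``the analysis also holds true for the FD mode,'' while you correctly flag that the FD-DF decodability constraint involving $\|\qH\|^2$ does not follow from AF feasibility (which involves only $\|\qg_{FD}^\dag\qH\|^2/\|\qg_{FD}\|^2$) and must be taken as a standing precondition --- a more careful treatment than the paper's own.
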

 \begin{proof}
    {  For the sake of simplicity, we focus on the HD mode but the analysis  also holds true for the FD mode.
    We prove this lemma from optimization's viewpoint by comparing  DF
    problem (\ref{eqn:prob:rate:max:DF:v0}) with AF problem (\ref{eqn:prob:rate:max:AF:v1}).

    Given an AF optimal beamforming solution    $(\qw_a^*,\qw_c^*)$ to    (\ref{eqn:prob:rate:max:AF:v1}),  we construct a new solution
    $(\qw_0^*,\qw_c^*)$ where   $\qw_0^* = \sqrt{P_0\|\qg\|^4 + \|\qg\|^2} \qw_a^*$ and then check whether it is feasible for
    (\ref{eqn:prob:rate:max:DF:v0}). It is   seen that $(\qw_0^*,\qw_c^*)$ achieves the same objective value for (\ref{eqn:prob:rate:max:DF:v0}) as
    $(\qw_a^*,\qw_c^*)$ for  (\ref{eqn:prob:rate:max:AF:v1}) and satisfies the last power constraint in
    (\ref{eqn:prob:rate:max:DF:v0}).  The second constraint in
    (\ref{eqn:prob:rate:max:DF:v0}) can be verified below
     \bea
       && \frac{|\qh_{c0}^\dag  \qw_0|^2}{|\qh_{c0}^\dag  \qw_c|^2+ P_t\|\qh_{c0}\|^2 +1}\notag\\
        &=&\frac{|\qh_{c0}^\dag  \qw_a|^2 (P_0\|\qg\|^4 + \|\qg\|^2)}{|\qh_{c0}^\dag  \qw_c|^2+ P_t\|\qh_{c0}\|^2 +1}\notag\\&\ge & \frac{\gamma_{0_{AF}}^{'}(P_0\|\qg\|^4 +
        \|\qg\|^2)}{(\|\qg\|^4-\gamma_{0_{AF}}^{'}\|\qg\|^2)}\label{eqn:comp:AF:DF:1}\\
        &=& \frac{ \frac{\gamma_{0_{DF}}^{'}}{P_0}(P_0\|\qg\|^4 + \|\qg\|^2)}{(\|\qg\|^4-\gamma_{0_{AF}}^{'}\|\qg\|^2)}\notag\\
      & =& \frac{  \gamma_{0_{DF}}^{'} (\|\qg\|^4 + \frac{\|\qg\|^2}{P_0})}{(\|\qg\|^4-\gamma_{0_{AF}}^{'}\|\qg\|^2)}>\gamma_{0_{DF}}^{'},\label{eqn:comp:AF:DF:2}
     \eea
    where (\ref{eqn:comp:AF:DF:1}) comes from the fact that $(\qw_a^*,\qw_c^*)$ is an optimal solution to    (\ref{eqn:prob:rate:max:AF:v1}) and we
    have used the approximation $h_0\approx 0$ and $\frac{\gamma_{0_{DF}}^{'}}{P_0} = \gamma_{0_{AF}^{'}}$ in (\ref{eqn:comp:AF:DF:2}).

    We have proved that for any optimal solution to AF problem (\ref{eqn:prob:rate:max:AF:v1}), we can find a feasible solution to DF problem
    (\ref{eqn:prob:rate:max:DF:v0}) with the same objective value, therefore the optimal solution to (\ref{eqn:prob:rate:max:DF:v0}) should have a
    greater objective value or the  DF relaying protocol achieves
    higher CU rate than the   AF relaying protocol.}
     \end{proof}

The assumption that the primary direct link is weak is one of the rationales in order to enable a cooperation between primary and secondary sources:
the primary link is weak with respect to the channel from the primary transmitter to the secondary transmitter
\cite{Simeone-07-stability}\cite{Krikidis-10} or it is not available due to path-loss effects and physical obstacles \cite{Bletsas-06}.
 The assumption $h_0\approx 0$ will help  derive a simple mode switching criterion and  Lemma 1
cannot be generalized for all cases without this assumption.

  It is also shown in \cite{AF-DF-appro} that asymptotically  the performance of AF is quite close to that of DF. Therefore, we choose the DF
  protocol to  compare the performance of the HD and FD modes for simplicity. We adopt the comparison result as  a unified criterion to select the mode
  using both AF and DF protocols. Next we will derive simpler closed-form DF solutions for the HD and FD modes   based on ZF criterion. For this  derivation,
  we assume that the transmit noise power scales with the  signal power, i.e.,  $P_t = \epsilon^2p_R $ and $P_t = \epsilon^2p_{R_{FD}}$ for HD and FD modes, respectively.
 \subsection{DF in the HD mode}
  For convenience, we define $\rho=1-\frac{|\qh_{c}^\dag\qh_{c0}|^2}{\|\qh_{c}\|^2\|\qh_{c0}\|^2}$ and write the beamforming vectors in the form:
 $\qw_c = \sqrt{q_c} \bar\qw_c, \qw_0 =\sqrt{q_0} \bar\qw_0$ with $\|\bar \qw_c\|=\|\bar \qw_0\|=1$ where $q_c$ and $q_0$ denote the transmit power
 for primary and cognitive signals, respectively.

 According to the ZF criterion, we have $\qw_c^\dag\qh_{c0}=0$ and in addition, $\qw_c$ needs to  maximize $|\qw_c^\dag\qh_{c}|$, therefore it admits the following expression
 \be
     \qw_c = \sqrt{q_c} \frac{ \left(\qI - \frac{\qh_{c0}\qh_{c0}^\dag}{\|\qh_{c0}\|^2}\right) \qh_c}{\|\left(\qI - \frac{\qh_{c0}\qh_{c0}^\dag}{\|\qh_{c0}\|^2}\right)
    \qh_c\|},
 \ee
 and the resulting CU channel gain is
  \be
    |\qw_c^\dag\qh_{c}|^2 = q_c\|\qh_{c}\|^2(1-\rho^2).
 \ee
 Similarly,

\vspace{-10pt}
{\small
\be
    \qw_0 = \sqrt{q_0}\frac{ \left(\qI - \frac{\qh_{c}\qh_{c}^\dag}{\|\qh_{c}\|^2}\right) \qh_0}{\| \left(\qI - \frac{\qh_{c}\qh_{c}^\dag}{\|\qh_{c}\|^2}\right)
    \qh_0\|}, \mbox{~and~}      |\qw_0^\dag\qh_{c0}|^2 =q_0\|\qh_{c0}\|^2(1-\rho^2).
 \ee
}%
Based on the above expressions,  the DF problem formulation (\ref{eqn:prob:rate:max:DF:v0}) is simplified  to the following power allocation:
  \bea\label{eqn:prob:DF:HD:power:0}
    \min_{q_0,q_c} &&  \frac{q_c\|\qh_c\|^2(1-\rho^2)}{ \epsilon^2P_C\|\qh_c\|^2+1} \\
    \mbox{s.t.} &&\frac{q_0\|\qh_{c0}\|^2(1-\rho^2)}{  \epsilon^2 P_C \|\qh_0^\dag\|^2 +1} \ge \gamma_{0_{DF}}^{'},\notag\\
&&  P_0\|\qg\|^2 \ge \gamma_{0_{DF}}^{'},  q_c + q_0\le  P_C,\notag
 \eea
 which gives the CU rate below

\vspace{-10pt}
{\small
 \be\label{eqn:CU:rate:DF:HD}
   \scriptstyle  R_C = \left\{\begin{array}{ll}
  \frac{1}{2}\log_2\left(1+   \frac{\max\left(0, P_C(1-\rho^2) -  \gamma_{0_{DF}}^{'}( \epsilon^2   P_C  +\frac{1}{\|\qh_{c0}\|^2})\right)}{ \epsilon^2
 P_C+\frac{1}{\|\qh_c\|^2}} \right), &\\
  \mbox{for }P_0\|\qg\|^2 \ge\gamma_{0_{DF}}^{'}&;\\
 0, ~~~~  \mbox{otherwise}. &
 \end{array}\right.
 \ee
}%
  Ignoring the PBS-PU and PBS-CU links, we have the following  approximation when $P_0\|\qg\|^2 \ge\gamma_{0_{DF}}^{'}$:

\vspace{-10pt}
{\small
   \bea
\textstyle R_C \approx \frac{1}{2}\log_2 \left(1 +      \frac{\max\left(0,  1-\rho^2 - (2^{2r_0}-1)(\epsilon^2     +
  \frac{    1}{  P_{C}\|\qh_{c0}\|^2 })\right)}{  \epsilon^2       +  \frac{  1}{   P_{C}
  \|\qh_{c}\|^2}}\right).
 \eea
}

\subsection{DF in the FD mode}
 Similar to the HD mode, we define $\rho_{FD}=1-\frac{|\qh_{c_{FD}}^\dag\qh_{c0_{FD}}|^2}{\|\qh_{c_{FD}}\|^2\|\qh_{c0_{FD}}\|^2}$, $\qw_c = \sqrt{q_c}
 \bar\qw_c, \qw_0 =\sqrt{q_0} \bar\qw_0$ where $\|\bar \qw_c\|=\|\bar \qw_0\|=1$ and $q_c$ and $q_0$ denote the respective transmit power for primary
 and cognitive signals, respectively. The DF problem in (\ref{eqn:prob:SINR:FD:DF:v0}) for the FD mode becomes
  \bea
    \min_{q_0,q_c} &&  \frac{ q_c\|\qh_{c_{FD}}\|^2(1-\rho_{FD}^2)}{     \epsilon^2\bar P_{C_{FD}}\|\qh_{c_{FD}}\|^2       +  P_{0_{FD}} |{h}_{0c}|^2   + 1} \\
    \mbox{s.t.} 
  &&  \frac{q_0\|\qh_{c0_{FD}}\|^2(1-\rho_{FD}^2)}{   \epsilon^2 \bar P_{C_{FD}} \|\qh_{c0_{FD}}\|^2      +  P_{0_{FD}} |{h}_{0}|^2  + 1 } \ge  \gamma_{0_{FD}}\notag\\
    && q_c + q_0  \le \bar P_{C_{FD}},\notag
  \eea 
 and   gives the CU rate (43).
\begin{figure*}
 \bea\label{eqn:rate:CU:DF:FD}
  \scriptstyle  R_{C_{FD}} =
   \log_2 \left(1 +      \frac{\max\left(0, \bar P_{C_{FD}}(1-\rho_{FD}^2)-\gamma_{0_{FD}}(\epsilon^2  \bar P_{C_{FD}}     +  \frac{P_{0_{FD}} |{h}_{0}|^2  +
    1}{\|\qh_{c0_{FD}}\|^2 })\right)}{  \epsilon^2 \bar P_{C_{FD}}        +  \frac{P_{0_{FD}} |{h}_{0c}|^2   +
    1}{\|\qh_{c_{FD}}\|^2}}\right).
 \eea
\end{figure*}
  Ignoring the PBS-PU and PBS-CU links, we have the approximation (44).
\begin{figure*}
   \bea
  \scriptstyle  R_{C_{FD}} \approx
 \log_2 \left(1 +      \frac{\max\left(0,  \bar P_{C_{FD}}(1-\rho_{FD}^2) -(2^{r_0}-1)(\bar P_{C_{FD}}\epsilon^2     +
  \frac{1}{\|\qh_{c0_{FD}}\|^2 })\right)}{  \bar P_{C_{FD}}\epsilon^2       +  \frac{  1}{  \|\qh_{c_{FD}}\|^2}}\right).
 \eea
\end{figure*}

 The mode selection for both AF and DF relaying protocols corresponds to a simple comparison between the achievable rates in (\ref{eqn:CU:rate:DF:HD}) and
 (\ref{eqn:rate:CU:DF:FD}).

In the next two subsections, we assume that the PBS-CBS links $\|\qg\|^2$ and $\|\qg_{FD}\|^2$ are sufficiently strong to support the required PU
rate and we focus on the CBS-PU and CBS-CU links to gain some insights on the impact of some system parameters.

\subsection{Same RF chains for HD and FD}
 First we assume that both HD and FD modes have the same number of RF chains, and the same sets of
 transmit and receive antennas, so we remove the subscript `HD'. In this case, all
 corresponding channel matrices are the same for both modes and the
 achievable CU rates are

\vspace{-10pt}
{\small
 \be
   \textstyle  
    \begin{aligned}
    &R_{C} \approx\\
     &\frac{1}{2}\log_2 \left(1 +      \frac{\max\left(0,  1-\rho^2 - (2^{2r_0}-1)(\epsilon^2     +
  \frac{    1}{  P_{C}\|\qh_{c0}\|^2 })\right)}{  \epsilon^2       +  \frac{  1}{   P_{C}
  \|\qh_{c}\|^2}}\right),  \\
&    R_{C_{FD}} \approx\\
 &    \log_2  \left(1 +      \frac{\max\left(0,  1-\rho^2 -(2^{r_0}-1)(\epsilon^2     +
  \frac{    2}{  P_{C}\|\qh_{c0}\|^2 })\right)}{  \epsilon^2       +  \frac{  2}{  P_{C}
  \|\qh_{c}\|^2}}\right).
  \end{aligned}
 \ee
}

 Setting both rates  to be equal to zero (if possible), we get the corresponding zero points for $\epsilon^2$, which represent the
 maximum tolerable transmit noise factors:
\be
    \begin{aligned}
    \epsilon_{FD}^2 = \frac{1-\rho^2}{ (2^{r_0}-1)} - \frac{2}{P_C\|\qh_{c0}\|^2},\\~
    \epsilon_{HD}^2 = \frac{1-\rho^2}{ (2^{2r_0}-1)} -
    \frac{1}{P_C\|\qh_{c0}\|^2}.
  \end{aligned}
 \ee
 We then derive the difference and the relative difference:
  \be
   \begin{aligned}
      \epsilon_{0,FD}^2 -  \epsilon_{0,HD}^2 &=& (1-\rho^2)  \frac{2^{r_0}}{ (2^{2r_0}-1)}  -
      \frac{1}{P_C\|\qh_{c0}\|^2},\\
     \frac{\epsilon_{0,FD}^2 -  \epsilon_{0,HD}^2}{\epsilon_{0,HD}^2} &=& \frac{(1-\rho^2) \frac{2^{r_0}}{ (2^{2r_0}-1)}  -
     \frac{1}{P_C\|\qh_{c0}\|^2}}{     \frac{1-\rho^2}{ (2^{2r_0}-1)} -
    \frac{1}{P_C\|\qh_{c0}\|^2}}\\ &=& 2^{r_0} + \frac{ (2^{r_0}-1)\frac{1}{P_C\|\qh_{c0}\|^2} }{\frac{1-\rho^2}{ (2^{2r_0}-1)} -
    \frac{1}{P_C\|\qh_{c0}\|^2}}.
  \end{aligned}
 \ee
 It is observed that as $P_C$ increases or $\rho^2$ decreases, FD  tends to be better than HD, but the relative improvement becomes less.
 As $r_0$ is decreasing, HD tends to perform better than FD. 

 \subsection{High CBS power $P_C$}
  In this case, HD and FD can have different RF chains and sets of   transmit and receive antennas. We focus on the scenario where the
  CBS's power is large. By using this assumption, the approximations of the  achievable rates are (48) and (49)
\begin{figure*}
  {\small
 \bea
 \scriptstyle   &&R_C \approx  
  \frac{1}{2}\log_2 \left(1 +      \frac{\max\left(0,  1-\rho^2 - (2^{2r_0}-1)(\epsilon^2     +
  \frac{    1}{  P_{C}\|\qh_{c0}\|^2 })\right)}{  \epsilon^2       +  \frac{  1}{   P_{C}
  \|\qh_{c}\|^2}}\right)
\approx  \frac{1}{2}\log_2\left(1+ \max\left(0,\frac{  1-\rho^2 }{ \epsilon^2} +2 -
 2^{2r_0}\right)\right),
\eea
 \bea
&&  R_{C_{FD}}  \approx
  \log_2 \left(1 +      \frac{\max\left(0,  1-\rho_{FD}^2 -(2^{r_0}-1)(\epsilon^2     +
  \frac{2}{P_C\|\qh_{c0_{FD}}\|^2 })\right)}{  \epsilon^2       +  \frac{  2}{ P_C
  \|\qh_{c_{FD}}\|^2}}\right)
\approx   \log_2\left(1+ \max\left(0, 2+ \frac{  1-\rho_{FD}^2 }{ \epsilon^2} -
 2^{r_0}\right)\right). 
 \eea}
\end{figure*}
 When $r_0>1$, the zero points for $\epsilon^2$ are
 \be
     \begin{aligned}\epsilon_{FD}^2 &= \frac{  1-\rho_{FD}^2 }{ 2^{r_0} - 2 },~~
    \epsilon_{HD}^2 &= \frac{  1-\rho^2 }{ 2^{2r_0} - 2 }.
 \end{aligned}
 \ee
 We can see that when
 \be
    \rho_{FD} \le   \frac{2^{2r_0} - 2^{r_0}  }{ 2^{2r_0} - 2 } + \frac{2^{r_0} - 2}{ 2^{2r_0} - 2 }\rho_{HD}^2,
 \ee
  the FD mode performs better than the HD mode. This happens when $N_t$ is close to $N$ or both are
  large.
\vspace{-5mm}
\section{Numerical Results}
Computer simulations are conducted to evaluate the performance of the proposed FD and hybrid schemes. We assume that the CBS has $N=6$ antennas,
which is a sufficient configuration for demonstrating the performance of the investigated schemes. The channel between any antenna pair from
different terminals is modeled as $h=d^{-\frac{c}{2}}e^{j\theta}$, where $d$ is the distance, $c$ is the path loss exponent (chosen as $3.5$), and
$\theta$ is uniformly distributed over $[0,2\pi)$. The distances from the CBS to the PBS, the PU and the CU are all normalized to one unit while the
distances from the PBS to the PU, the PBS to the CU are set to $2$ units, so their channels are much weaker than other links. The elements of the
loop interference channel $\qH$ are independent and identically distributed following $\mathcal{CN}(0,1)$. Unless otherwise specified, the PU's
target rate is $r_0=2$ bits per channel use (bpcu); for the HD mode, the CBS can use all $6$ antennas for receiving and transmitting signals; for the
FD mode, the CBS uses $N_t=4$ transmit antennas and $N_r=2$ receive antennas; the CBS transmit noise power is assumed to scale with the CBS power and
$\epsilon^2=10^{-4}$. We define the transmit SNR in the HD mode, transmit power normalized by noise power, as the power metric and the primary
transmit power is set to $10$ dB. Half transmit power in the FD mode is used to maintain the same energy consumption as the HD mode. An outage event
occurs when $r_0$ is not supported in the primary system for a channel instance. Except for Figs \ref{fig:rate:region:AF} and
\ref{fig:rate:region:DF}, $10^3$ and $10^4$ channel realizations are used to produce the results of the average rate and outage probability,
respectively. Whenever possible and necessary, the proposed FD and hybrid schemes will be compared with  the
 following solutions:
 \begin{itemize}
    \item DPC with a non-causal primary message at the CBS \cite{Devroye-bound}, \cite{Jafar-bound},\cite{Viswanath-bound}.
    In this case, the CBS uses the principles of DPC and pre-cancels the non-causal primary information in order to ensure an interference-free secondary
transmission. This scheme  requires a non-causal knowledge of the primary message at the CBS and therefore it has a limited practical interest.
However, it provides a useful theoretical upper-bound for any practical cognitive cooperative scheme and can be used for comparison purposes;
    \item Orthogonal transmission, i.e., the CBS transmits in such a
    way  to not interfere the PU without assisting the PU's transmission.
    The CBS rate can be found by solving  the following optimization problem:
       \bea
  \max_{ \qw_c}&& \log_2\left(1+\frac{|\qh_{c}^\dag  \qw_c|^2}{P_t \|\qh_c\|^2+1} \right) \notag\\
    \mbox{s.t.}&&
    \qh_{c0}^\dag  \qw_c=0, \|\qw_c\|^2 \le P_C;
 \eea
    \item HD mode, by default we assume that all $N$ antennas are used for both transmission and reception;
    \item HD mode using the same RF chains as the FD mode;
    \item Best HD/FD mode selection.
 \end{itemize}
 In Figs \ref{fig:rate:region:AF} and \ref{fig:rate:region:DF}, we plot the rate regions for a specific channel realization for AF and DF,
 respectively, when the CBS power is $10$ dB. It can be verified  that DPC provides a performance outer-bound and the performance difference between the proposed scheme and the DPC is mainly due
to the unrealistic assumption of non-causal primary information for the DPC.  The orthogonal transmission achieves a larger
 rate region for the CU because it does not assist the PU's transmission. Even when the HD mode uses the same RF chains as the FD mode, i.e., $4$ transmit
 antennas and $2$ receive antennas, the maximum PU rate is over three times higher than  that of  the orthogonal scheme. Further improvement is
 observed when the CBS uses all the $6$ antennas for both transmission and reception in the HD mode. When the CBS works in the FD mode, approximately
 $50\%$ higher rates for both the PU and the CU are achieved, compared with the HD mode using the same RF chains.
\begin{figure}[t]
\centering
\includegraphics[width=2.5in]{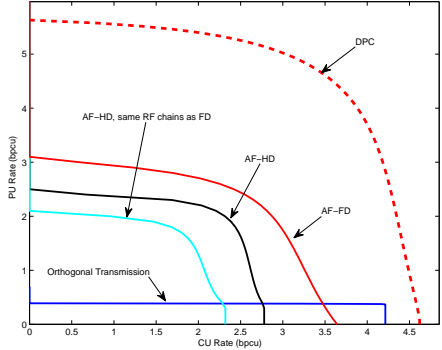}
\caption{Rate region for AF protocol.}\label{fig:rate:region:AF}
\end{figure}
\begin{figure}[t]
\centering
\includegraphics[width=2.5in]{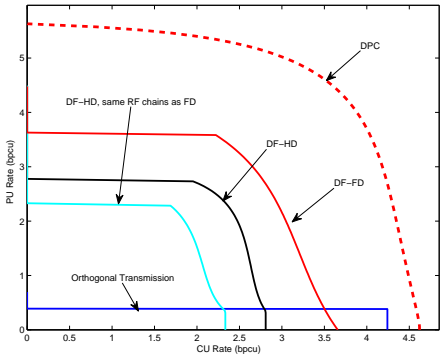}
\caption{Rate region for DF protocol.}\label{fig:rate:region:DF}
\vspace{-5mm}
\end{figure}

In Fig. \ref{fig:cu:rate:vs:cbs:power}, we plot the CU rate against the CBS power. It can be seen that the proposed FD and hybrid
schemes achieve almost three times the  CU rates provided by the HD mode with  the same RF chains. At low SNRs, the HD mode may perform better than
the FD mode while the performance gain of the FD over the HD mode is enlarged as the CBS power increases. It is also observed that the proposed hybrid
schemes perform nearly as well as the best mode selection. The proposed FD and hybrid schemes achieve the same slope for the CU rate as DPC.
\begin{figure}[t]
\centering
\includegraphics[width=2.5in]{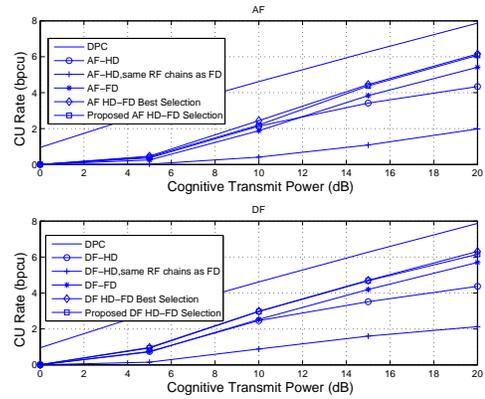}
\caption{CU rate vs. CBS Power.}\label{fig:cu:rate:vs:cbs:power}
\vspace{-5mm}
\end{figure}

In Fig. \ref{fig:cu:rate:vs:scale:nonscale}, we   study the impact of fixed and scalable transmit noise on the CU rate for the FD mode. It can be
observed that with a fixed transmit noise power, all the achievable rates are increasing with the CBS power. The achievable rate of the AF protocol
approaches that of the DF protocol at high SNRs. For the DPC scheme and the HD mode, the achievable rates saturate from $25$ dB of the CBS power,
which indicates that the CBS  should reserve some power in order to suppress the self-interference.
\begin{figure}[t]
\centering
\includegraphics[width=2.5in]{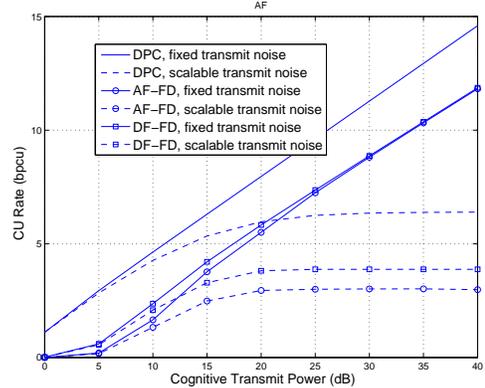}
\caption{CU rate for both scalable and nonscalable transmit noise power.}\label{fig:cu:rate:vs:scale:nonscale}
\vspace{-5mm}
\end{figure}

In Fig. \ref{fig:cu:rate:vs:noise:power}, we plot the CU rates versus the transmit noise scaling factor $\epsilon^2$ , when the CBS power is
$20$ dB.  We note that in general, FD outperforms HD even when $\epsilon^2$ is large, especially for the DF relaying protocol. This observation is
because the transmit noise limits the performance of HD while  FD can efficiently suppress the self-interference by employing optimal relay processing.
As for the AF relaying protocol, the HD mode may outperform the FD mode due to the fact that the CBS amplifies the self-interference. The performance of the proposed mode
selection is very close to the best selection for the AF relaying protocol; for the DF relaying protocol, they are almost identical and validate the effectiveness of the proposed selection.

\begin{figure}[t]
\centering
\includegraphics[width=2.5in]{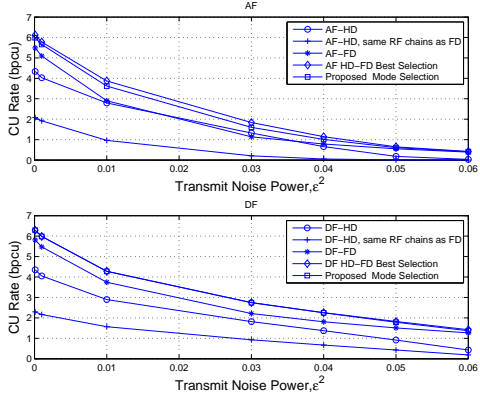}
\caption{CU rate vs. transmit noise power.}\label{fig:cu:rate:vs:noise:power}
\vspace{-5mm}
\end{figure}

In Fig. \ref{fig:pu:outage:vs:cbs:power},  we examine the effects of the CBS power on the PU's outage performance. We assume the PU requires a
rate of $3$ bpcu. First due to the weak primary link, the outage is almost $100\%$ if there is no assistance from the CBS. As expected, the HD mode is
not efficient  at low SNRs, and can even become worse than the direct transmission due to the two phases used; for AF, the outage performance is
improved only when the CBS power is higher than $15$ dB, while with the same energy consumption, FD and the hybrid schemes can reduce the outage
probability to $60\%$. With $25$ dB for the CBS power, AF-HD has an outage probability about $60\%$,  while the hybrid schemes reduces the outage probability
to below $40\%$. The FD mode with DF protocol achieves a lower saturated outage probability of $32\%$ when the CBS power is above $20$ dB.
\begin{figure}[t]
\centering
\includegraphics[width=2.5in]{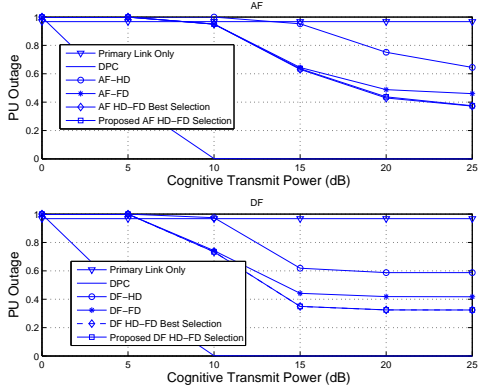}
\caption{PU rate outage vs. CBS power.}\label{fig:pu:outage:vs:cbs:power}
\vspace{-5mm}
\end{figure}

Finally in Fig. \ref{fig:antenna:config},  we investigate the impact of different transmit and receive antenna configurations on the FD mode, by
assuming the CBS has $6$ transmit and receive RF chains. We simulate four different cases $(N_t, N_r)\in \{(5,1),(4, 2),(3,3),(2,4)\}$. The same
trends are observed for both AF and DF protocols. It can be seen that $(N_t,N_r) =(3,3)$ provides the best performance for medium to high SNRs; this
is because the primary rate is upper bounded by the supportable rates of the PBS-CBS and CBS-PU links and therefore equal number of transmit and
receive antennas is a preferred configuration. At low to medium SNRs,  $(N_t,N_r) =(5,1)$ provides better performance than that of $(N_t,N_r)=(2,4)$,
because the rate is limited by the CBS-PU and CBS-CU links. Therefore the CU rate takes the expression (53). 
\begin{figure*}
{\small
    \bea
    &&R_{C_{FD}} \approx
\log_2 \left(1 +      \frac{\max\left(0,  1-\rho_{FD}^2 -(2^{r_0}-1)(\epsilon^2     +
  \frac{2}{P_C\|\qh_{c0_{FD}}\|^2 })\right)}{  \epsilon^2       +  \frac{  2}{ P_C
  \|\qh_{c_{FD}}\|^2}}\right).
 \eea}
\end{figure*}
 It is observed that as $N_t$ increases, both $\|\qh_{c0_{FD}}\|^2$ and $\|\qh_{c_{FD}}\|^2$ increase while $\rho_{FD}^2$ decreases, consequently the
 CU rate increases as well.  At high SNRs,  $(N_t,N_r) =(2,4)$ outperforms the $(N_t,N_r) =(5,1)$ configuration. This is because   the
 PBS-CBS link   limits the achievable  PU and CU rates and therefore more received antennas at the CBS can improve the CU rate. For the same reason,
 $(N_t,N_r) =(4,2)$ saturates when the CBS power is about $20$ dB while the performance of the case   $(N_t,N_r)
 =(2,4)$ improves with the CBS power until $25$ dB.
\begin{figure}[t]
\centering
\includegraphics[width=2.5in]{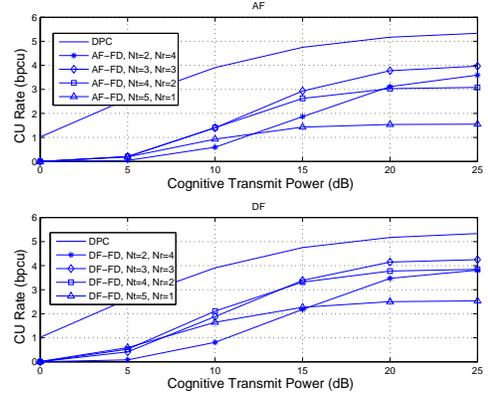}
\caption{CU rate for different antenna configurations}\label{fig:antenna:config}
\vspace{-5mm}
\end{figure}

\section{Conclusion}
 We have studied the HD and the FD operation modes for the CBS in  a   cooperative cognitive network. We have considered transmit imperfections  in both duplex modes and  modeled the resulting CBS's  residual self-interference  for the FD mode. Closed-form solutions or efficient 1-D search algorithms to achieve the optimal AF and DF beamforming vectors  have been provided in order to characterize the achievable primary-cognitive rate regions. In addition, we proposed a hybrid scheme to switch between the HD and FD modes based on the simplified ZF beamforming design.
Results have shown that the proposed FD and hybrid schemes can greatly enlarge the rate region compared to the HD mode, therefore they are introduced as
efficient solutions for the active cooperation between primary and cognitive systems. The proposed cooperation  substantially increases the opportunities for a CU to
access the primary spectrum and improves the overall system spectral efficiency.

It is worth noting that for scenarios with a strong primary direct link,  the PU receives two copies of the transmitted signal via both the direct and the relaying links by generating  an artificial multipath effect. While this paper and most studies in the literature discard the direct link and only decode the relaying information, as a future direction,  we can study how to efficiently combat this effect using equalization techniques \cite{outage-AF-FD}.

\section*{\sc Appendices}

\section*{A. Proof of Theorem 1}
\begin{proof}
 Without loss of generality, $\qA$ can be expressed in the form of
 \bea
     \qA &=& [\bar\qH  ~~\bar\qH^\bot] \left[\begin{array}{cc}
                               \qb & \qC \\
                               \qd & \qE
                             \end{array}\right] [\qg^\dag~~
                             ({\qg^\bot})^\dag] \\
                             &=& \bar\qH \qb\qg^\dag +
                             \bar\qH^\bot \qd \qg^\dag + \bar\qH
                             \qC({\qg^\bot})^\dag +
                             \bar\qH^\bot\qE({\qg^\bot})^\dag\notag
 \eea
where $\bar\qH\triangleq [\qh_{c0} ~\qh_c]$ and  $\qb \in \mathbb{C}^{2\times 1}, \qC \in \mathbb{C}^{2\times(N-1)}, \qd\in\mathbb{C}^{(N-2)\times
1}, \qE\in\mathbb{C}^{(N-2)\times(N-1)}$ are parameter vectors and matrices.

 A closer observation of the defined problem  reveals that the optimization will maximize    $|\qh_{c0}^\dag\qA\qg|^2$ while
minimize $|\qh_c^\dag\qA\qg|^2$, $\|\qh_c^\dag\qA\|^2, \|\qh_{c0}\qA\|^2, \|\qA\qg\|^2$ and $\|\qA\|^2$.  It is clearly seen that $\qC,\qE,  \qd$ do
not affect the term to be maximized and setting them to be zero will help reduce the terms to  be minimized, therefore $\qC=\qzero, \qd =
\qzero,\qE=\qzero$ and the optimal $\qA$ has the structure of $\qA = \bar\qH\qb\qg^\dag$ and can be written in a more general form of $\qA =
\qw_a\qg^\dag$ where $\qw_a=\bar\qH\qb$ is a new parameter vector.
\end{proof}

 \section*{B. Closed-form Solution to A General Rate Maximization Problem}
  Our aim here is to find the close-form solution to the rate maximization problem below:
    \bea\label{eqn:prob:rate:max:basic}
    \max_{\qw_1,\qw_2} && \frac{|\qh_2^\dag\qw_2|^2}{1+
    |\qh_2^\dag\qw_1|^2}\\
    \mbox{s.t.} && \frac{|\qh_1^\dag\qw_1|^2}{1+
    c|\qh_1^\dag\qw_2|^2}\ge \gamma_1, |\qw_1\|^2+ c\|\qw_2\|^2\le P_C,\notag
    \eea
    where $c$ is a constant, $\qh_1,\qh_2$ are $N\times 1$ vectors and $\gamma_1, P_C$ are positive scalars.{ This problem has the following physical
    meaning. Consider a   MISO broadcast system with an $N$-antenna BS and two single-antenna users. The channels from the BS to user 1 and user 2
    are $\qh_1$ and $\qh_2$, respectively. The noise powers at users are assumed to be one, otherwise, the channel can be normalized with the noise
    power.  Suppose the BS has a total power constraint $P_C$ and user 1 has a SINR constraint $\gamma_1$, then this problem has the interpretation
    of maximization of   user 2's SINR.}
   Suppose its optimal objective value is $\gamma_2^*$. To find the optimal solution to (\ref{eqn:prob:rate:max:basic}), we first consider the following
  weighted sum power minimization problem:
    \bea\label{eqn:prob:pow:min:basic}
    \min_{\qw_1,\qw_2} && \|\qw_1\|^2+ c\|\qw_2\|^2\\
    \mbox{s.t.} && \frac{|\qh_1^\dag\qw_1|^2}{1+
    c|\qh_1^\dag\qw_2|^2}\ge \gamma_1, \frac{|\qh_2^\dag\qw_2|^2}{1+
    |\qh_2^\dag\qw_1|^2}\ge \gamma_2.\notag
    \eea
  It can be validated that if we set $\gamma_2=\gamma_2^*$ in (\ref{eqn:prob:pow:min:basic}), then its optimal objective value is $P_C$ and vice
  versa. So we can focus on (\ref{eqn:prob:pow:min:basic}) in order to characterize the
  solution to (\ref{eqn:prob:rate:max:basic}).   The dual problem of (\ref{eqn:prob:pow:min:basic}) can be derived as
   \bea
    \max_{\lambda_1,\lambda_2\ge 0} && \lambda_1+\lambda_2\\
    \mbox{s.t.} && \qI +\lambda_2\qh_2\qh_2^\dag\succeq
    \frac{\lambda_1}{\bar\gamma_1}\qh_1\qh_1^\dag, \bar\gamma_1\triangleq   c\gamma_1\notag\\
    &&\qI +\lambda_1\qh_1\qh_1^\dag\succeq
    \frac{\lambda_2}{\gamma_2}\qh_1\qh_2^\dag\notag,
   \eea
   where $\lambda_1$  and $\lambda_2$ are dual variables.   The two linear matrix inequality constraints uniquely determine  $\lambda_1$ and $\lambda_2$:

\vspace{-10pt}
{\small
   \be
    \begin{aligned}
 \lambda_1&=& \frac{\bar\gamma_1}{ \qh_{1}^\dag\left(\qI + \lambda_1 \bar \qh_2\qh_2^\dag \right)^{-1}\qh_{1}},
 \lambda_2 &=& \frac{\gamma_2}{ \qh_{2}^\dag \left(\qI +
\lambda_1\qh_{1}\qh_{1}^\dag  \right)^{-1}
         \qh_2}.
        \end{aligned}\label{eqn:lambda01} 
 \ee
 }%
 Using matrix inversion lemma and define $\rho^2 \triangleq \frac{|\qh_1^\dag\qh_2|^2}{\|\qh_1\|^2\|\qh_2\|^2}$ we have
\be
    \begin{aligned}
  \lambda_1&=&\frac{\bar\gamma_1(1+\lambda_2\|\qh_2\|^2)}{\|\qh_{1}\|^2
+ \lambda_2(\|\qh_{2}\|^2\|\qh_1\|^2 -
    \|\qh_{2}^\dag\qh_1\|^2)}\\
    &=&\frac{\bar\gamma_1(1+\lambda_2\|\qh_2\|^2)}{\|\qh_1\|^2(1+\lambda_2\|\qh_2\|^2(1-\rho^2))}\label{eqn:lambda:c},\\
      \lambda_2&=&\frac{\gamma_2(1+\lambda_1\|\qh_1\|^2)}{\|\qh_{2}\|^2 + \lambda_1(\|\qh_{2}\|^2\|\qh_1\|^2 -
    \|\qh_{2}^\dag\qh_1\|^2)}\\ &=&\frac{\gamma_2(1+\lambda_1\|\qh_1\|^2)}{\|\qh_2\|^2(1+\lambda_1\|\qh_1\|^2(1-\rho^2))}.
\end{aligned}
 \ee Remember we also have a power equation below:
 \be\label{eqn:lambda01:sum}
    \lambda_1+\lambda_2 = P_C.
 \ee
 It is observed that $\lambda_1,\lambda_2,\gamma_2$ should satisfy and uniquely determined by  the above three equations
(\ref{eqn:lambda:c}-\ref{eqn:lambda01:sum}), so the analytical solutions can be found. Define $A=\|\qh_1\|^2 \|\qh_2\|^2(1-\rho^2),  B= -
(\|\qh_2\|^2\bar\gamma_1 + P_C  \|\qh_1\|^2 \|\qh_2\|^2(1-\rho^2) +  \|\qh_1\|^2)$, and $C= (P_C \|\qh_2\|^2+1)\bar\gamma_1$.
 Then from (\ref{eqn:lambda:c}), we have
 \be
    f(\lambda_1) \triangleq A\lambda_1^2 + B\lambda_1+C=0.
 \ee
 Since $A>0, C>0, B<0$,  $f(\lambda)=0$ has positive two roots. Because $f(P_C)<0$, we know that the optimal $\lambda_1$ corresponds to the minimum
 root. Once $\lambda_1$ is found, $\lambda_2$ and $\gamma_2$ can be easily derived from  (\ref{eqn:lambda:c}-\ref{eqn:lambda01:sum}).

 \end{document}